\newcommand\polylog{{\rm polylog}}
\newcommand {\R} {\mathbb{R}\xspace}
\newtheorem*{conjecture*}{Conjecture}
\newcommand{\enumit}[1]{\textcolor{darkgray}{\sffamily\bfseries\upshape\mathversion{bold}#1}}
\newcommand{\mkmcal}[1]{\ensuremath{\mathcal{#1}}\xspace}
\newcommand{\F}{\mkmcal{F}}
\newcommand{\T}{\mkmcal{T}}
\newcommand{\POld}{\ensuremath{\tilde{P}}}
\newcommand{\mOld}{\ensuremath{\tilde{m}}}
\newcommand{\SOld}{\ensuremath{\tilde{S}}}
\newcommand{\nOld}{\ensuremath{\tilde{n}}}
\newcommand{\tri}{\ensuremath{\blacktriangle}}
\newcommand{\Tr}{\ensuremath{\mathfrak{T}}\xspace}
\newcommand{\Vor}{\ensuremath{\mathit{VD}}\xspace}
\newcommand{\etal}{et al.\xspace}
\newcommand{\dynamicsp}{\textsc{Dynamic Shortest Path}\xspace}
\newcommand{\conequery}{\textsc{Cone Query}\xspace}
\def\polylog{\operatorname{polylog}}
\title{Nearest Neighbor Searching in a Dynamic Simple Polygon}
\author{Sarita de Berg}{Department of Information and Computing Sciences, Utrecht University, The Netherlands}{s.deberg@uu.nl}{}{}
\author{Frank Staals}{Department of Information and Computing Sciences, Utrecht University, The Netherlands}{f.staals@uu.nl}{}{}
\authorrunning{S. de Berg and F. Staals} 
\keywords{dynamic data structure, simple polygon, geodesic distance, nearest neighbor}  
\authorrunning{S. de Berg and F. Staals}
\begin{document}

\maketitle

\begin{abstract}
    In the nearest neighbor problem, we are given a set $S$ of point sites that we want to store such that we can find the nearest neighbor of a (new) query point efficiently. In the dynamic version of the problem, the goal is to design a data structure that supports both efficient queries and updates, i.e. insertions and deletions in $S$. This problem has been widely studied in various settings, ranging from points in the plane to more general distance measures and even points within simple polygons. When the sites do not live in the plane but in some domain, another dynamic problem arises: what happens if not the sites, but the domain itself is subject to updates?

    Updating sites often results in local changes to the solution or
    data structure, while updating the domain may incur many global
    changes. For example, in the closest pair problem, inserting a
    point only requires us to check if this point is in the new
    closest pair, while updating the domain might change the distances
    between most pairs of points in our set. Presumably, this is the reason that this form of dynamization has received much less attention. Only some basic problems, such as shortest paths and ray shooting, have been studied in this setting.
    
    Here, we tackle the nearest neighbor problem in a dynamic simple polygon. We allow insertions into both the set of sites and the polygon. An insertion in the polygon is the addition of a line segment starting at the boundary of the polygon. We present a near-linear size --in both the number of sites and the complexity of the polygon-- data structure with sublinear update and query time. This is the first nearest neighbor data structure that allows for updates to the domain.
\end{abstract}

\section{Introduction}

The nearest neighbor problem is a fundamental problem in computer
science: given a set of~$n$ point sites $S$ in some domain $P$, the
goal is to store $S$ so that given a query point $q \in P$ we can
efficiently find the point in $S$ closest to
$q$~\cite{Chan10,Kaplan17}. In case $P$ is the two-dimensional
Euclidean plane, such queries can be answered in $O(\log n)$ time by
computing the Voronoi diagram of $S$, and storing it for efficient
point location queries. In many applications, the domain however may
contain obstacles (e.g. buildings, cars, rivers, lakes etc.) that are
impassible. Hence, we may want to model $P$ as a simple polygon, or
even a polygonal domain (i.e. a polygon with holes), and measure the
distance between two points by the length of their shortest obstacle
avoiding path. In this setting, it is also possible to build the
Voronoi diagram. Aronov~\cite{aronov1989geodesic} showed that the so
called \emph{geodesic Voronoi diagram} in a simple polygon with $m$ vertices has
complexity $O(n+m)$, and gave an $O((n+m)\log(n+m)\log m)$ time algorithm to construct
it. This then allows for $O(\log (n+m))$ time nearest neighbor
queries. Finding an optimal algorithm to construct the geodesic Voronoi diagram remained elusive for quite a while. Only a few years ago,
after a series of improvements~\cite{
  liu20nearl_optim_algor_geodes_voron,oh20voron_diagr_moder_sized_point,papadopoulou1998geodesic}, Oh showed an optimal
$O(m + n\log n)$ time algorithm~\cite{Geodesic_voronoi_diagram}. When
$P$ is a polygon with holes, the Voronoi diagram has the same complexity, and can be computed in $O((n+m)\log(n+m))$
time~\cite{hershberger1999sssp}. 

Nearest neighbor searching is a \emph{decomposable search problem},
meaning that to find the nearest neighbor in the set $A \cup B$, we
can compute the nearest neighbor in set $A$, separately compute the
nearest neighbor in set $B$, and return the closest of these two
candidates~\cite{bentley1980decomposable}. Therefore, supporting efficient insertions of
new point sites in $\R^2$ (while still answering queries in
polylogarithmic time) is relatively simple using, for example, the logarithmic method~\cite{Overmars83}. However,
supporting both insertions and deletions in polylogarithmic time,
turned out to be quite challenging. Chan~\cite{Chan10}  presented the
first such results, achieving $O(\log^2 n)$ query time while
supporting insertions in $O(\log^3 n)$ and deletions in $O(\log^6 n)$
amortized expected time. Kaplan \etal~\cite{Kaplan17} and later
Chan~\cite{Chan20_dynamic_nn} further improved the update times to
$O(\log^2 n)$ for insertions, and $O(\log^4 n)$ for deletions. Kaplan \etal also generalized
these results to other (constant complexity) distance
functions~\cite{Kaplan17}. Agarwal, Arge, and
Staals~\cite{dynamic_geodesic_nn} showed that these techniques can be extended to efficiently answer nearest neighbor queries in a
simple polygon $P$. De Berg and Staals even presented a data structure
that can report efficiently report the $k$-nearest neighbors, for some
query parameter $k \in [n]$, in this setting~\cite{Dynamic_knn}.

The above results thus support applications in which the set of input
points may change over time. However, in such applications, the
obstacles are also likely to change. Unfortunately, none of the
approaches above support updates to the domain $P$, e.g. by inserting
or deleting an obstacle. Even more generally, there is relatively
little work in which the input domain~$P$ may change while maintaining
some non-trivial structure on $P$ or $S$. Results of this type are
mostly restricted to general purpose range or intersection
searching~\cite{agarwal93applic_new_space_partit_techn,
  dynamic_ray_shooting} or point
location~\cite{dynamic_point_location}. The most prominent work that
we are aware of that specifically maintains some data structure on top
of a dynamic polygonal domain is the work by Goodrich and
Tamassia~\cite{dynamic_ray_shooting_sp}. They support ray shooting and
shortest path queries in a dynamic connected planar subdivision $\hat{P}$ in
$O(\log^2 n)$ time. Ishaque and T{\'{o}}th show how to maintain the
(geodesic) convex hull of $S$ in $\hat{P}$ while supporting the insertion of
barriers (non-crossing segments such that all faces are simply connected) and site deletions in
$O(\polylog (n+m))$
time~\cite{ishaque14relat_convex_hulls_semi_dynam_arran}. Oh and Ahn
generalize this result to supporting both insertions and deletions (of
barriers and sites)~\cite{dynamic_convex_hull}. They achieve update
times of $\tilde{O}((n+m)^{2/3})$, where the $\tilde{O}(\cdot)$ notation
hides logarithmic factors in $n$ and $m$. A recent result of Choudhury
and Inkulu~\cite{dynamic_visibility} shows how to maintain the
visibility graph inside a dynamic simple polygon $P$ using $\tilde{O}(\Delta)$ time per update,
where $\Delta$ is the number of changes.

  \begin{figure}
      \centering
      \includegraphics{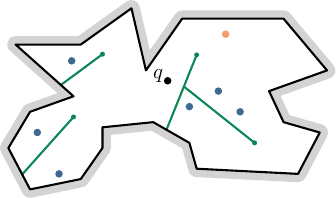}
      \caption{After inserting the green segments, the orange site is the nearest neighbor of $q$.}
      \label{fig:example-intro}
  \end{figure}

\subparagraph{Challenges.} A possible explanation for the lack of work
in this dynamic domain setting is that the problems seem extremely challenging: even just adding a
single line segment to the boundary of the polygon may significantly
change the distances between the sites in the polygon. This
severely limits what information (i.e. which distances) the data
structure can store. An insertion may also
structurally change the triangulation of the polygon (which is the
foundation upon which, e.g., the data structure of Agarwal, Arge,
and Staals~\cite{dynamic_geodesic_nn} is built).

\subparagraph{Our results.} We overcome these challenges, and present
the first dynamic data structure for the geodesic nearest neighbor
problem that allows updates to both the sites $S$ and the
simple polygon $P$ containing $S$. Our data structure achieves
sublinear query and update time and allows the following updates on $P$ and $S$ (Fig.~\ref{fig:example-intro}): \enumit{1.} inserting a segment (barrier) between points $u$ and $v$,
  where $u$ on the boundary~$\partial P$ of $P$ and $v \in P$ an arbitrary point, provided that the interior of this segment does
  not intersect $\partial P$, and \enumit{2.} inserting a site $s \in P$ into $S$.

This type of barrier insertion is similar to the insertion operation
considered by Ishaque and
T{\'{o}}th~\cite{ishaque14relat_convex_hulls_semi_dynam_arran} and Oh
and Ahn~\cite{dynamic_convex_hull}. Note that in case $v$ also lies on
the boundary of $P$, this actually splits the polygon into two
subpolygons, which can no longer interact. Hence, we can essentially
treat these subpolygons independently. In the remainder of the paper,
we thus focus on the more interesting case in which $v$ lies strictly
in the interior of $P$. This also slightly simplifies our
presentation, since $P$ then remains an actual simple polygon.

Our data structure uses $O(n(\log\log n + \log m) + m )$ space,
supports the above updates in amortized $\tilde{O}(n^{3/4}+m^{3/4})$ time, and
answers queries in expected $\tilde{O}(n^{3/4}+n^{1/4}m^{1/2})$
time.
While these query and update times are still far off from the
polylogarithmic query and update times for a static domain, they are
comparable to, for example, the result of Oh and Ahn for maintaining
the geodesic hull in a dynamic polygon~\cite{dynamic_convex_hull}.


\subparagraph{Organization.} Our data structure essentially consists
of two independent parts, both of which critically build upon balanced
geodesic triangulations~\cite{Balanced_geodesic_triangulation}. So, we
first review some of their useful properties in
Section~\ref{sec:preliminaries}. We then give an overview of the data
structure in
Section~\ref{sec:overview}. Section~\ref{sec:dynamic_shortest_path}
describes the first part of our data structure, which can be regarded
as an extension of the dynamic two-point shortest path data structure
of Goodrich and Tamassia~\cite{dynamic_ray_shooting_sp}. Our main
technical contribution is in Section~\ref{sec:conequery}, which
describes the second main part: a (static) data structure that can
answer nearest neighbor queries in a subpolygon $Q$ of $P$ that can be
specified at query time. This result may be of independent
interest. For this data structure, we need to generalize some of the
results of Agarwal, Arge, and Staals~\cite{dynamic_geodesic_nn}, which
we present in Section~\ref{sec:forest_ds_shortest_path}. Finally, in Section~\ref{sec:concluding_remarks}, we provide some concluding remarks.

\section{Preliminaries}\label{sec:preliminaries}

Let $P$ be a simple polygon with $m$ vertices, let $\partial P$ be the
boundary of $P$, and let $S$ be a set of $n$ point sites inside $P$.
We denote the shortest path that is fully contained within $P$ between
two point $p,q \in P$ by $\pi_P(p,q)$, or simply $\pi(p,q)$ if the
polygon is clear from the context. Similar to earlier papers about
geodesic Voronoi diagrams, we will use the general position assumptions
that no point is (ever) equidistant to four
sites in $S$, and no vertex of $P$ is equidistant to three
sites. 

A \emph{geodesic triangulation} of $P$ is a triangulation-like
decomposition of~$P$ into geodesic triangles. A \emph{geodesic
  triangle} with corners $u,v,w \in P$ is formed by the three shortest paths
$\pi(u,v)$, $\pi(v,w)$, $\pi(w,u)$. In general, such a
geodesic triangle consists of a simple polygon bounded by three
concave chains, and three polygonal chains emanating from the
vertices where the the concave chains join. The interior of the simple
polygon region is called the \emph{deltoid} region, and the polygonal
chains are called the \emph{tails} of the geodesic triangle. Note that
the deltoid region may be empty, for example if $w$ lies on
$\pi(u,v)$.

Similar to a triangulation, a geodesic triangulation decomposes $P$
into $O(m)$ geodesic triangles with corners that are vertices of $P$
whose boundaries do not cross. This triangulation induces a dual tree
$\T$, where two nodes are connected if their respective geodesic
triangles share two corners. We root this tree at a node whose
geodesic triangle shares a side with an edge of $P$. A geodesic
triangulation is called \emph{balanced} if the height of this tree is
$O(\log m)$. A balanced geodesic triangulation can be constructed in
$O(m)$ time~\cite{Balanced_geodesic_triangulation}.

\subparagraph{A dynamic shortest path data structure.} Here we shortly discuss the data structure of Goodrich and Tamassia for dynamic planar subdivisions~\cite{dynamic_ray_shooting_sp}.
This data structure supports both shortest-path and ray-shooting queries in $O(\log^2 m)$ time. Updates to the subdivision, such as inserting or deleting a vertex or edge, also take $O(\log^2 m)$ time.
The main component is a balanced geodesic triangulation with dual tree $\T$. A secondary dynamic point location data structure~\cite{dynamic_point_location} is build for this geodesic triangulation. A tertiary data structure stores, for each deltoid $\Delta$, the three concave chains bounding $\Delta$ in balanced tree structures. To support shortest path queries, they store for each node in $\T$ the tail of its geodesic triangle that is not shared by its parent in a balanced tree structure. The size of the data structure is $O(m)$. Using rotations, insertions and deletions can be performed in $O(\log^2 m)$ time.

\section{Overview of the data structure}
\label{sec:overview}

Our dynamic nearest neighbor data structure for a dynamic simple polygon consists of two independent data structures: a \dynamicsp data structure, which remains up-to-date, and a \conequery data structure, which is static and is rebuild after a number of updates. We denote by $P$ the current polygon with $m$ vertices and by $S$ the current set of~$n$ sites in $P$. Furthermore, we denote by $\POld$ and $\mOld$ the polygon and number of vertices and by $\SOld$ and $\nOld$ the set of sites and its size at the last rebuild of the \conequery data structure. Let $k$ be the number of updates (insertions) since the last rebuild, i.e. $k = m + n - \mOld - \nOld$.

The main purpose of the two data structures is the following. When
given a query point~$q$, we first use the \dynamicsp data structure to compute the distance from $q$ to each of the newly inserted sites, i.e. in $S \setminus \SOld$. This gives us the nearest neighbor of $q$ among these sites.
To find the nearest neighbor of $q$ among $\SOld$, we first use the \dynamicsp data structure to find
$O(k)$ regions in $P$, where each region $R$ is paired with an
apex point $v_R$ on $\partial R$. Each regions has the property that
the shortest path $\pi_P(p,q)$ from a point $p \in R$ to $q$ is given
by $\pi_{\POld}(p, v_R) \cup \pi_P(v_R,q)$. This means that the
nearest neighbor of $q$ in $P$ of the sites $\SOld \cap R$ is the same as
the nearest neighbor of $v_R$ in $\POld$ of the same sites $\SOld \cap
R$. 

This is where the \conequery query data structure comes into
play. This data structure allows us the answer such a nearest neighbor
query for some region $R$ with apex $v_R$ in $\POld$. After querying
the \conequery data structure for each of the $O(k)$ regions generated
by the \dynamicsp data structure, we return the nearest neighbor
of~$q$ among $\SOld$, which is the closest of these $O(k)$ sites. Next, we define the
queries for the \conequery data structure more precisely. We refer to
these queries as \emph{bounded nearest neighbor queries}.

\begin{figure}
    \centering
    \includegraphics[page=2]{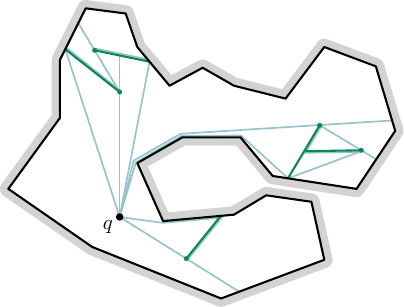}
    \caption{Two bounded nearest neighbor queries: a geodesic triangle ($R_1$) and a geodesic cone~($R_2$).}
    \label{fig:region_types}
\end{figure}

\begin{definition}[Figure~\ref{fig:region_types}]\label{def:bounded_nn}
    A \emph{bounded nearest neighbor query} $(R,v_R)$ in $\POld$ consists of a region $R$ and an point $v_R \in \partial R$ and asks for the nearest neighbor of $v_R$ in $\SOld \cap R$, where $R$ is of one of the following two types:
    \begin{itemize}
        \item a \emph{geodesic triangle}: a region bounded by three
          shortest paths in $\POld$, or
        \item a \emph{geodesic cone}: a region bounded by two shortest
          paths in $\POld$ and part of $\partial \POld$.
    \end{itemize}
\end{definition}

The \dynamicsp data structure that finds these geodesic triangle and geodesic cone regions is based on the data structure of Goodrich and Tamassia~\cite{dynamic_ray_shooting_sp} for ray shooting and shortest path queries in a dynamic planar subdivision. We augment this data structure to distinguish between ``old'' and ``new'' vertices, i.e. the vertices in $\POld$ and in $P \setminus \POld$. In Section~\ref{sec:dynamic_shortest_path}, we discuss this data structure in more detail, and obtain the following result.

\begin{restatable}{theorem}{dynamicSP}\label{thm:dynamic_sp}
    A \dynamicsp data structure maintains a dynamic simple polygon with $O(\log^2m)$ update time using $O(m)$ space such that for a query point $q$ a set of $O(k)$ bounded nearest neighbor queries that together find the nearest neighbor of $q$ in $P$ among $\SOld$ in $P$ can be computed in $O(k \log^2 m)$ time.
\end{restatable}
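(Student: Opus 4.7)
My plan is to instantiate the dynamic shortest path data structure of Goodrich and Tamassia~\cite{dynamic_ray_shooting_sp} on the current polygon $P$ and tag every polygon vertex stored in its balanced geodesic triangulation with a single bit recording whether the vertex belongs to $\POld$ (``old'') or was introduced as (part of) a new barrier since the last rebuild (``new''). Because the tagging is purely combinatorial, it survives the rotations that drive updates, so the $O(\log^2 m)$ update time and $O(m)$ space of Goodrich--Tamassia are inherited directly. I would additionally maintain an auxiliary linked list of the $O(k)$ new vertices, which costs $O(1)$ extra work per update.

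The structural observation I would exploit is the following apex property: for any site $p\in\POld$, if $v$ is a new vertex on $\pi_P(p,q)$ such that the subpath from $p$ to $v$ contains no further new vertex in its interior and crosses no new barrier in $\POld$, then $\pi_P(p,v)=\pi_{\POld}(p,v)$ and the full path decomposes as $\pi_{\POld}(p,v)\cup\pi_P(v,q)$. Consequently $v$ is a valid apex for $p$ with $d_P(p,q)=d_{\POld}(p,v)+d_P(v,q)$. Since there are only $O(k)$ new vertices (and one additional ``apex'' $q$ for points whose shortest path to $q$ uses no new vertex at all), it suffices to cover $\POld$ by $O(k)$ geodesic triangles or cones, one per candidate apex.

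To produce the queries for a given $q$, I would first locate $q$ in the geodesic triangulation of $P$ in $O(\log^2 m)$ time, and then iterate over the new vertices using the auxiliary list. For each new vertex $v$ I compute $\pi_P(q,v)$ and the distance $d_P(q,v)$ with a single shortest path query, in $O(\log^2 m)$ time. I then carve out the region $R_v$ by extending the final segment of $\pi_P(q,v)$ past $v$ in the two tangent directions at $v$ using geodesic ray shooting in $\POld$, and combining these extensions with the portion of the inserted barrier incident to $v$ and, where required, a connecting arc of $\partial\POld$. By construction the boundary of $R_v$ consists of two shortest paths in $\POld$ together with either a third shortest path or a piece of $\partial\POld$, so $R_v$ is a geodesic triangle or geodesic cone in the sense of Definition~\ref{def:bounded_nn}, with $v_R:=v$. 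Iterating over all $O(k)$ new vertices produces the required collection of bounded nearest neighbor queries in $O(k\log^2 m)$ time overall.

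The hard part will be making the region construction consistent when several new barriers lie close together. A shortest path in $P$ from a site to $q$ may bend at many new vertices in sequence, so the naive tangent cones emanating from different new vertices can nest and overlap; one must subtract from $R_v$ the sub-region that is itself shadowed by a new vertex closer to $q$, or, equivalently, view the family $\{R_v\}$ as the cells of a shortest path map from $q$ in $P$ restricted to bends at new vertices. I expect to prove by induction along the shortest path tree of $q$ in $P$ that, when cut consistently, each $R_v$ remains a geodesic triangle or cone in $\POld$ and each site $s\in\SOld$ lies in exactly one $R_v$ for which the decomposition $d_P(s,q)=d_{\POld}(s,v_R)+d_P(v_R,q)$ holds; since the separating cuts are precisely the already-computed paths $\pi_P(q,v)$, no extra shortest path queries are needed, and the total query time stays $O(k\log^2 m)$. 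The nearest neighbor of $q$ among $\SOld$ is then the minimum of the returned bounded query answers after adding the stored $d_P(v_R,q)$.
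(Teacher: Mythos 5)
Your high-level plan is the same as the paper's: augment the Goodrich--Tamassia structure with an ``old/new'' marking, and decompose $P$ into $O(k)$ cells of a shortest path map of $q$ in which only bends at new (marked) vertices matter, each cell becoming a bounded nearest neighbor query in \POld with the first marked vertex on the path to $q$ as apex. However, the part you flag as ``the hard part'' is exactly the content of the paper's proof, and you leave it as an unproven plan. Two concrete problems. First, your accounting ``one region per candidate apex, plus one for $q$'' is wrong: the set of points whose shortest path to $q$ uses no new vertex is in general bounded by many extension rays and barrier pieces, so it is not a single geodesic cone or triangle and must itself be split into up to $O(k)$ cones with apex $q$; similarly a single new vertex can serve as apex for several cells once barriers nest. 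Your construction never produces these apex-$q$ cells, so the queries you generate do not cover $\SOld$. The paper sidesteps all of this by working along $\partial P$: it inserts the $O(k)$ marked vertices and, for each, the point where the extension of the last path segment hits $\partial P$ into a balanced tree ordered along $\partial P$; consecutive boundary points then delimit the cells (triangle if the two points bound the same marked edge, cone otherwise), and maximality/consistency of the cells comes for free from this ordering rather than from an induction over the shortest path tree.

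Second, even to carry out your own plan you need to identify, for each cell, the last common marked vertex on the paths from its two delimiting boundary points to $q$ (equivalently, the first marked vertex after a given position on $\pi(s,q)$) in $O(\log^2 m)$ time. A single bit per vertex plus a linked list of new vertices does not support this: locating the first marked vertex on a shortest path would require scanning. The paper's augmentation is slightly stronger for exactly this reason: each internal node of every chain tree stores a boolean recording whether its subchain contains a marked edge, so that after the $O(\log^2 m)$ operations that make $\pi(p,q)$ a diagonal of the geodesic triangulation, the first or last marked vertex on it is found by descending the (at most two) chain trees in $O(\log m)$ time. With that augmentation and the boundary-point construction above, the apex of every cell is computed in $O(\log^2 m)$ time and the $O(k\log^2 m)$ total follows; without them, your argument has a genuine gap both in correctness (coverage of $\SOld$ by valid geodesic triangles/cones of \POld) and in the claimed query time.
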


The \conequery data structure is the main data
structure we present. We give a short overview of the data structure
here, illustrated in Figure~\ref{fig:overview-conequery}, and discuss the details and proofs in
Section~\ref{sec:conequery}. The data structure is based on a balanced
geodesic triangulation $\T$, augmented with three auxiliary data structures. We call the geodesic
triangulation the first level data structure, and the three auxiliary
data structures the second level data structures. The second level
data structures are the following. First, for each node in $\T$, we store a
partition tree~\cite{Chan_partition_tree} on the sites that lie in the corresponding deltoid. Furthermore, we store several third
level data structures for each node of the partition tree. This allows us to
find the nearest neighbor of $v_R$ for (a subset of) the
sites contained in the deltoid. Second, for each node in~$\T$, we store the sites that lie on the boundary of the geodesic triangle in a binary search tree. This allows us the find the nearest neighbor of $v_R$ for a subset of the sites that lie on the boundary of the geodesic triangle.
Third, for each edge in $\T$, which corresponds to a shortest path between two points on $\partial \POld$, we store a nearest neighbor data structure that allows
us to compute the nearest neighbor of $v_R$ among the
sites in one of the subpolygons defined by this shortest path, provided that $v_R$
lies in the other subpolygon. We show that we can answer a bounded nearest neighbor query by querying $O(\log \mOld)$ of these level two data structures. This results in the following theorem.

\begin{figure}
    \centering
    \includegraphics{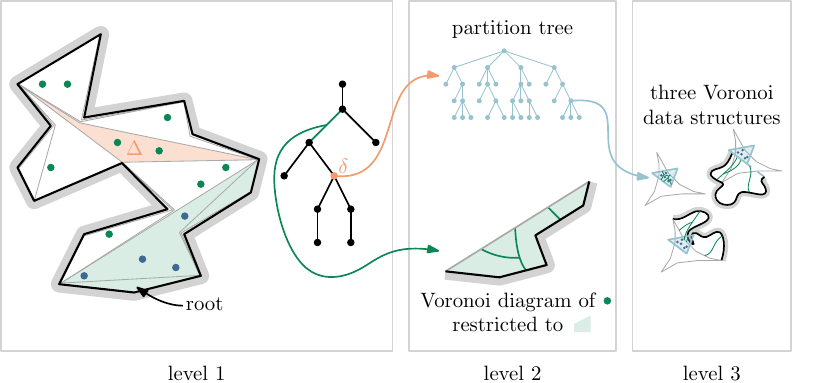}
    \caption{Overview of the \conequery data structure. For each node in the geodesic triangulation (level 1) we store a partition tree and for each edge we store some restricted Voronoi diagram (level~2). For each node in the partition tree we store three Voronoi based data structures (level 3). Nodes of the geodesic triangulation whos deltoid is empty are omitted.}
    \label{fig:overview-conequery}
\end{figure}

\begin{restatable}{theorem}{THMconeQuery}\label{thm:conequery}
    For a static polygon $\POld$ with $\mOld$ vertices and a set $\SOld$ of $\nOld$ sites in $\POld$, the \conequery data structure can be constructed in $O(\mOld \log \log \nOld + 
\nOld \log \nOld \log \log \nOld + 
\nOld \log \nOld \log^2 \mOld))$ time and $O(\nOld \log \log \nOld + \nOld \log \mOld + \mOld)$ space, such that a bounded nearest neighbor query can be answered in $O(\sqrt{\nOld} \log^{3/2} \mOld + \log \nOld \log^2 \mOld )$ expected time. 
\end{restatable}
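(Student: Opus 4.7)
\emph{Query procedure.} The plan is to answer a bounded nearest neighbor query $(R, v_R)$ by traversing the dual tree $\T$ of the balanced geodesic triangulation. Since $R$ is bounded by a constant number of shortest paths of $\POld$ (and possibly a portion of $\partial \POld$), and $\T$ has height $O(\log \mOld)$, each such bounding path crosses an $O(\log \mOld)$-sized set of geodesic triangles, yielding $O(\log \mOld)$ canonical nodes and edges at which the region $R$ meets the triangulation. I would then partition $\SOld \cap R$ into three kinds of candidates, one per level-2 structure: (i) sites lying in deltoids whose geodesic triangle is fully contained in $R$, which group into $O(\log \mOld)$ subtrees of $\T$ hanging off edges of crossed triangles and are handled by the per-edge subpolygon nearest-neighbor structures; (ii) sites lying on the concave chains shared between the triangles of those enclosed subtrees and their parents, handled by the boundary binary search tree at each visited node; and (iii) sites in deltoids that $\partial R$ crosses, which is where the partition trees come in.

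\emph{Handling crossed deltoids.} For each crossed node $v$, the intersection $R \cap \Delta_v$ is argued to decompose into $O(1)$ wedge- or triangle-shaped regions whose boundary pieces are straight inside $\Delta_v$ (see the obstacle paragraph below). I would feed each such piece as a simplex/cone query to the partition tree at $v$, which produces $O(\sqrt{n_v})$ canonical subsets, where $n_v$ is the number of sites in $\Delta_v$. Each canonical subset is then probed with one of the three level-3 Voronoi-based nearest-neighbor structures attached to the corresponding partition-tree node, returning the closest site to $v_R$ in that subset. The overall answer is the minimum over all candidates from kinds (i)--(iii).

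\emph{Complexity bookkeeping.} The dominant work comes from the partition-tree queries: summing over the $O(\log \mOld)$ crossed nodes using $\sum_v n_v \le \nOld$ and Cauchy--Schwarz gives
\[
\sum_{v} \sqrt{n_v}\cdot \log \mOld \;\le\; \sqrt{O(\log \mOld)}\cdot \sqrt{\nOld}\cdot \log \mOld \;=\; O\bigl(\sqrt{\nOld}\,\log^{3/2} \mOld\bigr),
\]
where the inner $\log \mOld$ factor accounts for a level-3 point location. The additive $O(\log \nOld \log^2 \mOld)$ absorbs the initial $O(\log^2 \mOld)$ point location in the triangulation, together with $O(\log \nOld)$-time boundary-BST and $O(\log \nOld \log \mOld)$-time edge nearest-neighbor probes at each of the $O(\log \mOld)$ visited nodes. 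For space, the triangulation uses $O(\mOld)$, each site appears on the boundary of $O(\log \mOld)$ geodesic triangles along its root-to-leaf path in $\T$, giving $O(\nOld \log \mOld)$ for the boundary BSTs and per-edge structures, while the partition trees augmented by the level-3 structures contribute $O(\nOld \log \log \nOld)$ in the standard way. The construction time follows by charging site distribution via point location ($O(\mOld \log \log \nOld + \nOld \log \nOld \log \log \nOld)$), partition-tree building with third-level structures ($O(\nOld \log \nOld \log \log \nOld)$), and the $O(\log \mOld)$ copies of each site in per-edge structures ($O(\nOld \log \nOld \log^2 \mOld)$).

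\emph{Main obstacle.} I expect the delicate step to be showing that within a deltoid the geodesic query region genuinely reduces to Cartesian simplex queries: one must exploit that shortest paths from an outside point $v_R$ enter $\Delta_v$ through a single vertex of $\partial \Delta_v$ and then travel along straight lines, so that $R \cap \Delta_v$ decomposes into $O(1)$ straight-sided cones or triangles that both the planar partition tree and the third-level Voronoi-based nearest-neighbor structures can handle. Formalising this decomposition, choosing which of the three level-3 structures to invoke on each canonical subset, and ensuring the minimum over all returned candidates is indeed the nearest neighbor in $\SOld \cap R$, is where the geometric care is concentrated.
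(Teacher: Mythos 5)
Your proposal follows essentially the same route as the paper: the same reduction of a bounded nearest neighbor query to $O(\log m)$ per-node (type (a)) and per-edge (type (b)) queries on the balanced geodesic triangulation, the same use of Chan's partition tree with three level-3 Voronoi-based structures inside crossed deltoids, the same per-edge restricted-Voronoi forests for the subtrees fully contained in a cone, and the same Cauchy--Schwarz aggregation to $O(\sqrt{n}\log^{3/2} m)$. You also correctly isolate the key geometric step you call the main obstacle: the paper's Lemma~\ref{lem:constant_number_triangles} covers $\Delta \cap R$ by $O(1)$ interior-disjoint Euclidean triangles containing no extra sites, by replacing each convex chain of $\partial(\Delta\cap R)$ with the segment joining its endpoints (possible because a deltoid contains no polygon vertices), giving at most nine edges and hence seven triangles.

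The one genuine gap is in the query-time bookkeeping for the partition-tree part. A level-3 probe is not an $O(\log m)$ point location: the restricted-Voronoi forest query costs $O(\log|C_i|\log m)$, and the geodesic Voronoi diagram of $\tri(\mu)\cap\Delta$ costs $O(\log(|C_i|+m_i))$, where $m_i$ is the complexity of the deltoid clipped to the cell. With your accounting (``$O(\sqrt{n_v})$ canonical subsets, an inner $\log m$ each'') the per-node bound $O(\sqrt{n_v}\log m)$ is unjustified; the naive estimate is $O(\sqrt{n_v}\log n_v\log m)$, and the extra $\log n$ factor would spoil the stated $O(\sqrt{n}\log^{3/2} m)$ term. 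What the paper uses is the stronger guarantee of Chan's partition tree (Lemma~\ref{lem:canonical_subsets}, i.e.\ Corollary 6.2 with any fixed $\gamma<1/2$): the reported canonical subsets satisfy $\sum_i |C_i|^\gamma = O(\sqrt{n_v})$, hence $\sum_i \log|C_i| = O(\sqrt{n_v})$, and combined with $\log(|C_i|+m_i)\le \log|C_i|+\log m$ this gives $O(\sqrt{n_v}\log m)$ per crossed deltoid, after which your Cauchy--Schwarz step goes through unchanged. A smaller omission: you assert but do not explain how the $O(\log m)$ crossed nodes, and in the cone case the splitting edge of $\T$, are found; the paper needs dedicated $O(\log^2 m)$-time primitives for this (side tests against a shortest path and first/last intersection of two shortest paths, Lemmas~\ref{lem:side_test_queries} and~\ref{lem:intersection_SP}), though that cost is indeed absorbed by your additive $O(\log n\log^2 m)$ term.
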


To obtain our final result, we rebuild the \conequery data structure once $k$ becomes larger than $\nOld^{1/4} + \mOld^{1/4}$. This results in update and query times that are $\tilde{O}(n^{3/4} + m^{3/4})$. Note that by rebuilding the \conequery data structure earlier (or later), we can increase (or decrease) the update time and decrease (or increase) the query time, respectively. In particular, rebuilding the data structure once $k$ becomes larger than $K$, results in an amortized update time of $\tilde{O}((n + m)/K)$ and an expected query time of $\tilde{O}(K \sqrt{n})$.

\begin{restatable}{theorem}{thmRunningTime}
    We can build an insertion-only data structure for a dynamic simple polygon~$P$ with $m$ vertices and a dynamic set $S$ of $n$ sites in $P$ using $O(n (\log\log n + \log m) + m)$ space with $O(n^{3/4} \log n (\log \log n + \log^2m) + m^{3/4} \log \log n)$ amortized update time that can answer nearest neighbor queries in $O(n^{3/4}\log^{3/2} m + m^{1/4}(n^{1/2}\log^{3/2}m + \log n \log^2m))$ expected~time.
\end{restatable}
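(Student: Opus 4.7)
The plan is to combine the two main building blocks (Theorems~\ref{thm:dynamic_sp} and~\ref{thm:conequery}) via a standard rebuild-when-too-many-updates schema. We maintain the \dynamicsp data structure at all times, so it always reflects the current polygon $P$ and site set $S$. The \conequery data structure, being static, is built on some past snapshot $(\POld,\SOld)$ and is only refreshed periodically. Concretely, I would fix a threshold $K$ (to be optimized at the end), and rebuild the \conequery data structure from scratch on the current $(P,S)$ whenever the update counter $k = m + n - \mOld - \nOld$ exceeds $K$. Insertions themselves only need to call the \dynamicsp update routine, which costs $O(\log^2 m)$ per insertion by Theorem~\ref{thm:dynamic_sp}, plus the amortized share of a future rebuild.

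For a query point $q$, I would proceed in two phases. First, ask the \dynamicsp data structure for the nearest neighbor of $q$ among the (at most $k$) sites in $S\setminus\SOld$; by Theorem~\ref{thm:dynamic_sp} this takes $O(k\log^2 m)$ time since every new site can be handled with a single shortest-path query. Second, invoke \dynamicsp on $q$ to obtain the promised set of $O(k)$ bounded nearest neighbor queries $(R,v_R)$ whose answers collectively cover $\SOld$; this costs another $O(k\log^2 m)$ by Theorem~\ref{thm:dynamic_sp}. I then feed each of these $O(k)$ pairs into the \conequery data structure, paying $O(\sqrt{\nOld}\log^{3/2}\mOld + \log\nOld\log^2\mOld)$ expected time per query by Theorem~\ref{thm:conequery}, and return the closest of all candidates.

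The space bound is immediate: $O(m)$ for \dynamicsp plus $O(\nOld\log\log\nOld + \nOld\log\mOld + \mOld)$ for \conequery, which is $O(n(\log\log n + \log m) + m)$ since $\nOld\le n$ and $\mOld\le m$. The update bound follows by charging the rebuild cost, $O(\mOld\log\log\nOld + \nOld\log\nOld\log\log\nOld + \nOld\log\nOld\log^2\mOld)$, over the $K$ updates that triggered it, giving an amortized cost of $O\!\left(\tfrac{m}{K}\log\log n + \tfrac{n}{K}\log n(\log\log n + \log^2 m)\right)$ on top of the $O(\log^2 m)$ spent on the \dynamicsp update. The expected query cost is $O(k(\sqrt{n}\log^{3/2}m + \log n\log^2 m))$ with $k\le K$.

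The only real choice is balancing $K$: I would set $K = n^{1/4}+m^{1/4}$, so that $n/K = O(n^{3/4})$ and $m/K = O(m^{3/4})$ for the update bound, while $k\sqrt{n} = O((n^{1/4}+m^{1/4})\sqrt{n}) = O(n^{3/4}+m^{1/4}n^{1/2})$ in the query bound. Plugging these in yields exactly the claimed expressions, modulo collapsing the subdominant $n^{1/4}\log n\log^2 m$ term into $n^{3/4}\log^{3/2}m$ (which is straightforward for non-trivial $n$; tiny $n$ makes every term a constant). There is no conceptual obstacle here; the only thing that needs care is to verify that the chosen $K$ simultaneously beats both the $n$- and $m$-contributions in update \emph{and} query time, which is precisely why $K$ must mix $n^{1/4}$ and $m^{1/4}$ rather than use a single parameter.
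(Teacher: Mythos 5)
Your proposal is correct and follows essentially the same route as the paper: maintain \dynamicsp throughout, rebuild the static \conequery structure once $k$ exceeds $n^{1/4}+m^{1/4}$, amortize the rebuild cost of Theorem~\ref{thm:conequery} over those $k$ updates, and answer a query by handling the $O(k)$ new sites directly and issuing $O(k)$ bounded nearest neighbor queries. The bookkeeping (space, amortized update, and expected query bounds, including the subsumed lower-order terms) matches the paper's proof.
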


\begin{proof}
    Rebuilding the \conequery data structure once $k$ becomes larger than $\nOld^{1/4}+\mOld^{1/4}$ ensures that $k = O(n^{1/4} + m^{1/4})$. Theorems~\ref{thm:dynamic_sp} and~\ref{thm:conequery} then directly imply the query time.

    To obtain the amortized update time, we distribute the running time for constructing the \conequery data structure over the time steps up to the next update. Theorem~\ref{thm:conequery} states that the construction time is $O(\mOld \log \log \nOld + 
\nOld \log \nOld \log \log \nOld + 
\nOld \log \nOld \log^2 \mOld))$. There are $\nOld^{1/4}+\mOld^{1/4}$ time steps until the next rebuild, so the amortized cost of the rebuilding is:
\begin{align*}
   & \frac{O(\mOld \log \log \nOld + 
\nOld \log \nOld \log \log \nOld + 
\nOld \log \nOld \log^2 \mOld))}{\nOld^{1/4}+\mOld^{1/4}} \\
& = O\left(\frac{\mOld \log \log \nOld}{\nOld^{1/4}+\mOld^{1/4}} + \frac{\nOld \log \nOld \log \log \nOld}{\nOld^{1/4}+\mOld^{1/4}} + \frac{\nOld \log \nOld \log^2 \mOld}{\nOld^{1/4}+\mOld^{1/4}} \right)\\
& \leq O\left(\mOld^{3/4} \log \log \nOld + \nOld^{3/4} \log \nOld \log \log \nOld +\nOld^{3/4} \log \nOld \log^2 \mOld \right).
\end{align*}
This subsumes the update time of the \dynamicsp data structure.
\end{proof}

\section{Dynamic shortest path data structure}
\label{sec:dynamic_shortest_path}

The \dynamicsp data structure is an extension of the dynamic data
structure of Goodrich and Tamassia~\cite{dynamic_ray_shooting_sp},
discussed in Section~\ref{sec:preliminaries}.  We first explain
how we adapt it to support efficient access to our newly inserted
vertices. Then, we show how to use the \dynamicsp data structure to
generate $O(k)$ bounded nearest neighbor queries that can together
answer a nearest neighbor query, thus proving Theorem~\ref{thm:dynamic_sp}.

\subparagraph{Extending the dynamic shortest path data structure.}  We
extend the dynamic shortest path data structure of Goodrich and
Tamassia~\cite{dynamic_ray_shooting_sp} so that we have efficient
access to the vertices and edges that were inserted since the last
rebuild. We call these \emph{marked} vertices and edges.  In the data
structure of ~\cite{dynamic_ray_shooting_sp}, each polygonal chain is
stored in a \emph{chain tree}: a balanced binary tree where leaves
correspond to the edges in the chain and internal nodes to the
vertices of the chain. Each internal node also corresponds to a
subchain and stores its length. We additionally store a boolean in each node that that is true whenever its
subchain contains a marked edge, and false otherwise. 
Next to
this dynamic data structure, we keep a list of all marked
vertices. For this list, we consider the endpoint of an inserted edge
to be marked, even if the vertex already existed in $\POld$.

\begin{restatable}{lemma}{maintainingDS}
    The \dynamicsp data structure maintains a dynamic simple polygon with $O(\log^2m)$ update time using $O(m)$ space.
\end{restatable}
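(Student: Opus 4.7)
The plan is to inherit the space and update bounds directly from the Goodrich--Tamassia dynamic planar subdivision data structure~\cite{dynamic_ray_shooting_sp}, and then argue that our two augmentations (the boolean ``contains-a-marked-edge'' flag at each internal node of every chain tree, and the global list of marked vertices) add only a constant factor in space and at most an $O(\log^2 m)$ additive overhead per update.

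For the space bound, I would first recall that the Goodrich--Tamassia structure (the balanced geodesic triangulation, the dynamic point location structure, and the chain trees for concave chains and tails) uses $O(m)$ space in total. Adding one boolean per internal node of each chain tree clearly contributes $O(1)$ per node, hence $O(m)$ overall. The list of marked vertices has size at most $k = O(m)$ since a vertex is marked only if it is newly inserted or is an endpoint of a newly inserted edge. Summing yields $O(m)$.

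For the update time, I would decompose an insertion into the operations already analysed in~\cite{dynamic_ray_shooting_sp}---$O(\log m)$ chain trees are touched, each by $O(\log m)$ rotations or leaf insertions/deletions---and show that each such elementary operation can be accompanied by the bookkeeping needed for our flag in $O(1)$ extra time. Concretely, (i)~inserting the leaf for a newly inserted edge sets its flag to \emph{true} and then walks the $O(\log m)$ ancestors in its chain tree, recomputing each flag as the OR of its two children's flags; (ii)~during a rotation at an internal node, the flags of the constantly many nodes whose child-sets change are recomputed in $O(1)$ from their (now updated) children. This yields $O(\log m)$ per affected chain tree, hence $O(\log^2 m)$ overall, matching the original bound. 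Appending the new endpoint(s) to the global list of marked vertices is $O(1)$.

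The only mildly subtle point---and the one I would spend the most care on---is making sure that \emph{every} way in which an insertion can alter a chain tree (leaf additions, splits/merges, and the cascade of rotations used by Goodrich and Tamassia to keep the geodesic triangulation balanced) preserves the invariant that each internal node's flag equals the OR of its children's flags. Since this invariant is purely local and each of the $O(\log^2 m)$ elementary structural changes already performed by their update procedure touches only $O(1)$ parent-child links, piggy-backing the flag recomputation on these changes is straightforward and contributes no asymptotic overhead, giving the claimed $O(\log^2 m)$ update time.
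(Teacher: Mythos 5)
Your proposal is correct and follows essentially the same route as the paper: inherit the $O(m)$ space and $O(\log^2 m)$ update time from Goodrich--Tamassia, and observe that the per-node boolean can be maintained in $O(1)$ time per rotation (as the OR of the children's flags), so splits/splices and hence updates keep their original asymptotic cost, while the marked-vertex list adds only $O(m)$ space. Your treatment of the flag invariant is simply a more detailed spelling-out of the same argument.
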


\begin{proof}
    The original dynamic shortest path data structure has $O(\log^2m)$ update time and $O(m)$ space~\cite{dynamic_ray_shooting_sp}. We additionally store a single boolean for each node, thus the space remains $O(m)$. These booleans can be updated in constant time per rotation, so splitting and splicing these trees remains $O(\log m)$ time. It follws that the $O(\log^2m)$ update time is retained.
\end{proof}

\begin{restatable}{lemma}{findingMarked}\label{lem:finding_marked}
    Given two points $p,q \in P$, the \dynamicsp data structure can find the first (or last) marked vertex on $\pi(p,q)$ in $O(\log^2m)$ time.
\end{restatable}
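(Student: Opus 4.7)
The plan is to reduce the task of locating the first marked vertex on $\pi(p,q)$ to a binary-search-style traversal over the boolean flags attached to the chain-tree nodes. I would first invoke the shortest-path query of Goodrich and Tamassia~\cite{dynamic_ray_shooting_sp} to obtain, in $O(\log^2 m)$ time, an implicit representation of $\pi(p,q)$ as an ordered concatenation of $O(\log m)$ subchains, each of which is a contiguous range inside one of the chain trees stored in the data structure. A contiguous range in a balanced binary tree decomposes canonically, by a standard top-down split, into $O(\log m)$ maximal subtrees whose leaves cover exactly the edges of that range. Carrying this out for all $O(\log m)$ subchains produces a total of $O(\log^2 m)$ canonical subtrees, each obtained in $O(\log m)$ time, and at every such subtree the stored boolean tells us in $O(1)$ time whether its leaves contain a marked edge.

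Next I would scan these canonical subtrees in the order induced by the direction from $p$ to $q$ and stop at the first one whose flag is \emph{true}. From that subtree I descend to a leaf by, at each internal node, moving to the ``$p$-side'' child if its flag is \emph{true} and otherwise to the other child. The descent ends at a leaf representing the first marked edge on $\pi(p,q)$, and its endpoint closer to $p$ is the first marked vertex on the path (recall that the endpoints of any inserted edge are, by construction, marked vertices). Finding the \emph{last} marked vertex is entirely symmetric: reverse the scan order and prefer the ``$q$-side'' child during the descent.

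Summing costs: the shortest-path extraction is $O(\log^2 m)$, the canonical decomposition of the $O(\log m)$ subchains is $O(\log m)$ each for a total of $O(\log^2 m)$, the flag tests at the $O(\log^2 m)$ canonical roots are $O(1)$ apiece, and the final descent inside the chosen subtree is $O(\log m)$, giving the claimed $O(\log^2 m)$ bound. The main subtlety to keep straight is the \emph{orientation} of each subchain: its chain tree stores it in a fixed order, but $\pi(p,q)$ may traverse it in either direction, so the mapping between ``$p$-side child'' and ``left/right child'' flips accordingly. This orientation is already produced by the shortest-path query, and handling it is pure bookkeeping that does not affect the asymptotic running time.
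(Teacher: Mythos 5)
Your reduction to a flag-guided search in the chain trees is the right spirit, and your cost accounting is fine, but there is a genuine correctness gap in the last step: you locate the first \emph{marked edge} on $\pi(p,q)$ and then declare its endpoint nearer to $p$ to be the first marked vertex. The implication you invoke only goes one way. Endpoints of inserted edges are indeed marked, but a shortest path can pass through a marked vertex without containing any marked edge at all: after inserting a barrier $\overline{uv}$ with $v$ in the interior of $P$, geodesics typically bend \emph{around} the tip $v$ (or around $u$), so they contain the marked vertex $v$ while both incident path edges are tangent segments rather than the inserted segment itself. Your edge-flag search then skips such a vertex and returns a later marked vertex, or reports that there is none -- and these wrap-around vertices are exactly the ones the lemma is used for later (they become the apices $v_R$ of the bounded nearest neighbor queries). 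The repair is to search for marked vertices directly: store in each chain-tree node a boolean recording whether its subchain contains a marked \emph{vertex} (recalling that an endpoint of an inserted edge is marked even if it already existed in $\tilde{P}$), and run your top-down descent on those flags. (The paper's own wording defines the flags via marked edges but then searches for marked vertices; for the lemma to hold the flags must track vertices, so do not take the edge-based formulation literally.)

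A secondary, smaller point concerns the interface you assume for the shortest-path query. The paper uses the Goodrich--Tamassia property that the query can restructure the triangulation in $O(\log^2 m)$ time so that $\pi(p,q)$ becomes a side of the root geodesic triangle, hence is stored in at most two chain trees (deltoid chain plus at most one tail); a single flag-guided search per tree then finishes in $O(\log m)$ time, with no canonical decomposition into $O(\log^2 m)$ pieces. Your alternative -- that the query hands back $\pi(p,q)$ as $O(\log m)$ contiguous ranges of existing chain trees -- is geometrically plausible (the path meets only $O(\log m)$ geodesic triangles because each diagonal is crossed in one connected piece and the dual tree is balanced), but it is an additional claim about the query output that you would need to justify against the data structure as actually maintained; with the restructuring view this issue disappears. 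Neither point affects your running-time analysis, but the vertex-versus-edge issue must be fixed for the statement as claimed.
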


\begin{proof}
    To perform a shortest path query between points $p,q \in P$, Goodrich and Tamassia~\cite{dynamic_ray_shooting_sp} show that we can update the data structure such that $\pi(p,q)$ is one of the diagonals of the geodesic triangulation in $O(\log^2m)$ time. In particular, $\pi(p,q)$ is equal to one of the sides of the root geodesic triangle. This path is stored in at most two chain trees: one for the subpath that is the deltoid boundary, and one for the subpath that is the tail of the geodesic triangle. Note that the root geodesic triangle has one side that corresponds to an edge of $P$ and thus has at most one tail. To find the first (or last) vertex on $\pi(p,q)$, we perform the same operations on our $\dynamicsp$ data structure. We then search in both chain trees for the first (or last) marked vertex in $O(\log m)$ time.
\end{proof}

\subparagraph{Computing the bounded nearest neighbor queries.}
Next, we explain how to use the \dynamicsp data structure to find for
a query point $q \in P$ a set of bounded nearest neighbor queries such
that the nearest neighbor of $q$ is the result of one of these
queries. For each bounded nearest neighbor query we return the apex $v_R$, the two other endpoints $s,t$ of the shortest paths bounding $R$, and a boolean indicating whether $R$ is a geodesic triangle or geodesic cone. In case $R$ is a geodesic cone, we order $s,t$ such that the part of $\partial P$ in $R$ is given by the clockwise traversal from $s$ to $t$ along $\partial P$. We will choose our region $R$ in such way that the paths bounding $R$ are shortest paths in both $P$ and $\POld$. In other words, there are no marked vertices (that are reflex vertices) on these shortest paths.

\begin{lemma}
    We can compute a set of $O(k)$ bounded nearest neighbor queries in $\POld$ that allows us to find the nearest neighbor of a point $q \in P$ among the sites in $\SOld$ in $O( k \log^2 m)$~time.
\end{lemma}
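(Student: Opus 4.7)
The plan is to build a tree $T'$ on the marked reflex vertices of $P$ (reflecting how the shortest path tree from $q$ in $P$ visits marked vertices) and use $T'$ to carve $\POld$ into $O(k)$ geodesic triangles and geodesic cones, each paired with a marked vertex (or with $q$) as its apex. Let $M$ be the set of marked reflex vertices of $P$; since there have been $k$ updates, $|M|=O(k)$. First, for each $v\in M$ I apply Lemma~\ref{lem:finding_marked} to the pair $(q,v)$ to find the last marked vertex $p_v$ on $\pi_P(q,v)$ strictly before $v$, taking $p_v=q$ if none exists. These parent pointers define a tree $T'$ on $M\cup\{q\}$ rooted at $q$, constructed in $O(k\log^2 m)$ time.

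Next, for each node $v\in T'$ I emit bounded nearest neighbor queries whose union covers the region $R_v$ of sites $s\in\SOld$ for which $v$ is the last marked vertex on $\pi_P(q,s)$ (or, for $v=q$, for which $\pi_P(q,s)$ hits no marked vertex at all). Since $v$ is reflex in $P$, the set of directions in which $\pi_P(v,\cdot)$ may continue away from $\pi_P(q,v)$ is an angular cone at $v$, bounded by two tangent directions determined by the approach direction of $\pi_P(q,v)$; for $v=q$ the cone is just the full local neighborhood of $q$ in $P$. The children $w_1,\dots,w_{d_v}$ of $v$ in $T'$ lie in this outgoing cone; I sort them by angle around $v$. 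Each pair of consecutive children $(w_i,w_{i+1})$, together with the two extreme sectors bounded by a tangent direction, defines a sector bounded by two shortest paths in $\POld$ emanating from $v$: either $\pi_{\POld}(v,w_i)$ and $\pi_{\POld}(v,w_{i+1})$, or one such path and a straight tangent segment when the sector is extremal. Because $w_i$ and $w_{i+1}$ are consecutive children, no marked vertex lies strictly inside either bounding path, so each bounding path is a shortest path in both $P$ and $\POld$, as required.

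For each sector I output a bounded nearest neighbor query $(R,v)$, where $R$ is closed on its far side either by a piece of $\partial\POld$ (giving a geodesic cone) or by a shortest path in $\POld$ between two boundary points (giving a geodesic triangle). The total number of queries is $\sum_{v\in T'}(d_v+1)=O(|T'|)=O(k)$. Identifying the closing curve of each sector requires $O(1)$ ray-shooting or shortest-path queries in the \dynamicsp data structure, each taking $O(\log^2 m)$ time, so the construction runs in $O(k\log^2 m)$ time overall. Correctness is immediate: every $s\in\SOld$ has a unique last marked vertex $v$ on $\pi_P(q,s)$, hence lies in exactly one sector, and inside that sector $\pi_P(q,s)=\pi_P(q,v)\cup\pi_{\POld}(v,s)$. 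Thus if the \conequery data structure returns the site $s^\ast\in\SOld\cap R$ closest to $v$ in $\POld$, we recover $d_P(q,s^\ast)=d_P(q,v)+d_{\POld}(v,s^\ast)$, and taking the minimum over all $O(k)$ sectors yields the nearest neighbor of $q$ in $\SOld$.

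The main obstacle I expect is verifying that every sector indeed fits one of the two geodesic templates, especially in degenerate configurations: a child $w_i$ lying along a tangent direction of $v$, two children sharing a common geodesic from $v$, or a sector that wraps along a nontrivial piece of $\partial\POld$. These cases should be resolvable by a careful but essentially routine case analysis that uses the \dynamicsp data structure's ray-shooting capability to locate the closing curve of each sector; provided each such lookup costs $O(\log^2 m)$, the overall $O(k\log^2 m)$ bound is preserved.
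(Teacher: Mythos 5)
Your overall plan---parent pointers to the last marked vertex, one apex per marked vertex, and a fan of queries per apex---is the same shortest-path-map idea underlying the paper's proof, but the step where you ``close each sector'' is precisely where the argument breaks, and it is not a routine case analysis. The cell of points whose last marked vertex is $v$ is bounded not only by pieces of $\partial \POld$ and by the geodesics $\pi(v,w_i)$ to the children, but also by the \emph{windows} at the children (the extension of the last segment of $\pi(q,w_i)$ beyond $w_i$ until it hits $\partial P$) and by inserted edges (the barrier incident to $v$, and the barriers whose interior endpoints are children of $v$). Hence the far side of a sector between two consecutive children is in general a concatenation of several pieces---a child's window, stretches of old boundary, a barrier side---and is neither a single piece of $\partial\POld$ nor a single shortest path, so the emitted query is not of the form required by Definition~\ref{def:bounded_nn}. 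A concrete instance: a barrier hangs from the boundary of $\POld$ into the cell of $v$ with interior tip $z$, a child of $v$; the sector of $v$ beyond $\pi(v,z)$ is bounded by $\pi(v,z)$, the window at $z$, a stretch of old boundary, and the barrier incident to $v$---four pieces. Moreover, if you close such a sector with a single curve anyway, the region either loses sites whose apex is $v$ or swallows sites on the far side of a barrier whose tip is a child of $v$; for such a site $s$ the candidate value $d_P(q,v)+d_{\POld}(v,s)$ \emph{underestimates} $d_P(q,s)$ because $\pi_{\POld}(v,s)$ crosses the barrier, so taking the minimum over candidates can return a wrong nearest neighbor. Your terse remark that each sector is ``closed on its far side either by a piece of $\partial\POld$ or by a shortest path'' therefore asserts exactly the nontrivial claim that needs proof, and as stated it is false.

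The missing ingredient---which is how the paper proceeds---is to extend the path at each marked vertex by a ray-shooting query until it hits $\partial P$, and to cut along $\partial P$ at \emph{all} marked vertices and all these ray-shooting feet; each query region is then bounded by two apex geodesics that run all the way to $\partial P$ (they may graze, but not bend at, marked vertices) plus a single boundary piece in between, which is either old boundary (geodesic cone) or a single inserted edge (geodesic triangle). In the example above, extending $\pi(v,z)$ straight through $z$ along the window turns that sector into a legitimate cone. A further, related flaw: sorting the children ``by angle around $v$'' is not well defined, since two children's geodesics can leave $v$ in the same direction and diverge only at an unmarked reflex vertex; ordering the cut points along $\partial P$, as the paper does, sidesteps this (and also handles the apex $q$ uniformly). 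The pieces of your write-up that do go through---building the parent tree via Lemma~\ref{lem:finding_marked}, the $O(k)$ count, the observation that a path to a child has no marked vertices in its interior and is therefore a shortest path in both $P$ and $\POld$, and the $O(k\log^2 m)$ time budget---match the paper, but without the boundary-extension and boundary-ordering mechanism the construction does not produce valid bounded nearest neighbor queries.
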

\begin{proof}
    We first describe the partition of $P$ that will result in the $O(k)$ bounded nearest neighbor queries, and then discuss how to compute them. We partition $P$ into $O(k)$ maximal regions such that for every point in a region $R$, the path to $q$ uses the same \emph{marked} vertices, see Figure~\ref{fig:computing_regions}. This can be viewed as a shortest path map for $q$, where we only consider marked vertices. Each region in this map will be a bounded nearest neighbor query together with its apex, i.e. the first marked vertex on the path from any point in the region to $q$. 

    \begin{figure}
    \centering
    \includegraphics{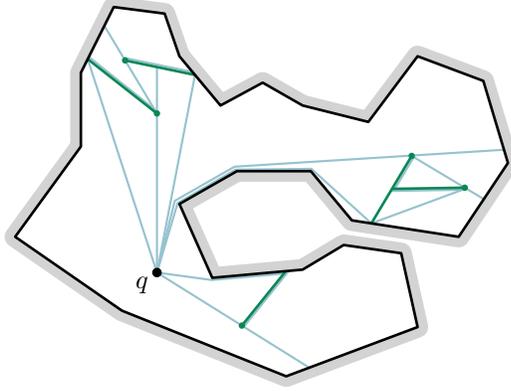}
    \caption{The bounded nearest neighbor query regions that are obtained for the query point $q$. The green segments are the segments inserted after the last rebuild.}
    \label{fig:computing_regions}
\end{figure}

    To find these regions, we keep track of all points on $\partial P$ that define the boundary between two such regions in a balanced binary tree $T$ based on the order along $\partial P$. 
    Observe that all marked vertices define such a boundary point. 
    We thus insert all $O(k)$ marked vertices into $T$.
    Then, for each marked vertex $v$, we:

    \newpage
    \begin{enumerate}
        \item Compute the shortest path $\pi(v,q)$.
        \item Perform a ray shooting query in the direction of the last segment of $\pi(v,q)$ to find the point $x$ where the extension of this segment hits $\partial P$.
        \item Insert $x$ into $T$. 
    \end{enumerate}
    
The \dynamicsp data structure supports shortest path queries in
$O(\log^2m)$ time. We can also obtain
the last segment on the shortest path using the chain trees representing the
path. The ray shooting query that follows again takes $O(\log^2m)$
time. Finally, inserting $x$ into $T$ takes just $O(\log k)$ time. As
there are $O(k) = O(m)$ marked vertices, this takes $O(k \log^2m)$ time in total.

What remains is to find the bounded nearest neighbor queries using $T$. Each pair $s,t$ of consecutive points in $T$ defines such a query. If $s$ and $t$ are endpoints of the same marked edge, then their query region $R$ is a geodesic triangle, otherwise $R$ is a geodesic cone. The apex $v_R$ of the bounded nearest neighbor is the last marked vertex on $\pi(q,s) \cap \pi(q,t)$, or $q$ if there is no such vertex. We distinguish the following cases:
\begin{itemize}
    \item If $s$ (or $t$) is a marked, then $v_R$ is the first (other) marked vertex on $\pi(s,q)$ (or~$\pi(t,q)$). 
    \item If both $s$ and $t$ are not marked, and thus included in $T$ in the ray shooting step, then $v_R$ is the second marked vertex on $\pi(s,q)$, i.e. the first vertex after the vertex used in the ray shooting step.
\end{itemize}
Note that in both cases $v_R$ is also on $\pi(t,q)$, as otherwise the ray shooting step on $v_R$ would add some point $x$ between $s$ and $t$ on $\partial P$ to $T$. Lemma~\ref{lem:finding_marked} implies that we can find $v_R$ in $O(\log^2m)$ time (to find the second marked vertex we can search the corresponding chain tree from the first marked vertex in $O(\log m)$ time).
\end{proof}

\section{Static cone query nearest neighbor data structure}
\label{sec:conequery}

\renewcommand{\mOld}{\ensuremath{m}}
\renewcommand{\POld}{\ensuremath{P}}
\renewcommand{\SOld}{\ensuremath{S}}
\renewcommand{\nOld}{\ensuremath{n}}

In this section, we present the static \conequery data structure. Throughout this section, let $P$ be a \emph{static} simple polygon with $m$ vertices and $S$ a \emph{static} set of $n$ sites in $P$. We want our data structure to answer bounded nearest neighbor queries
for a region $R$ and a query point $q$ on $\partial R$, which in our case will be the apex point $v_R$, as in
Definition~\ref{def:bounded_nn}. We assume that the region $R$ is given as endpoints of the (at most three) shortest paths bounding it, and in case $R$ is a geodesic cone, an indication which part of $\partial \POld$ is in $R$. 

The base of our data structure is a geodesic triangulation $\T$. We
construct several data structures for the nodes and edges of $\T$. We
assign each site $s \in S$ to the highest node in $\T$ whose geodesic
triangle contains $s$. Note that there is only one  node whose geodesic triangle contains $s$ if $s$
lies in some deltoid region, but there may be more if $s$ is on the
boundary of a geodesic triangle.
For a geodesic triangle $\delta \in \T$, we denote the set of sites assigned to $\delta$ by~$S_\delta$. For the deltoid $\Delta$ of $\delta$, we define $S_\Delta := S_\delta \cap \Delta$.

As a preprocessing step, we triangulate $P$~\cite{Chazelle_triangulate} and construct the shortest path data structure
of Guibas and Hershberger~\cite{2PSP_simple_polygon}. We also construct a point location data structure~\cite{Kirkpatrick_point_location} on the geodesic triangulation. Each face in this subdivision is associated with its node in $\T$, and each edge is associated with the highest node in $\T$ that has this edge on its boundary. The geodesic triangulation can also be used to answer ray shooting queries in $O(\log \mOld)$ time~\cite{Balanced_geodesic_triangulation}.

Next, we define two fundamental
queries for a node in $\T$.

\begin{definition}\label{def:type_a_b}
    Let $\delta$ be a node in $\T$. Let $(R,q)$ be a bounded nearest neighbor query, then we define the following subqueries.
    \begin{itemize}
        \item \emph{Type (a)}: return the nearest neighbor of $q$ in $S_\delta \cap R$.
        \item \emph{Type (b)}: let $\T_x$ be the subtree of $\T$ rooted at a child $x$ of $\delta$ and $P_x$ the corresponding subpolygon, i.e. $P_x := \bigcup_{y \in \T_x} y$. For $q \notin P_x$, return the nearest neighbor of $q$ in $S \cap P_x$.
    \end{itemize}
\end{definition}

For each node $\delta \in \T$, we store two data structures that together can answer type (a) queries. One that finds the nearest neighbor of $q$ among the sites on the boundary of the geodesic triangle, and one that finds the nearest neighbor of $q$ among the sites in the deltoid region. For each edge $(\delta, x) \in \T$ we store a data structure that can answer type (b) queries. 

Next, we show that we can answer bounded nearest neighbor queries using $O(\log \mOld)$ type~(a) and (b) queries. We first state two lemmas that prove useful in finding these queries.

\begin{restatable}{lemma}{sideTest}
  \label{lem:side_test_queries}
  Let $p,r$ be two points on $\partial \POld$, and let $s \in
  \POld$. In $O(\log m)$ time we can test if $s$ lies right of the
  shortest path $\pi(p,r)$.
\end{restatable}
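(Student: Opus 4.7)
The approach is to identify the ``apex'' $a$ at which the two shortest paths $\pi(p,s)$ and $\pi(p,r)$ diverge and then perform a single orientation test at $a$. By general position I may treat $s \notin \pi(p,r)$ as the generic case. The plan is to reuse the Guibas–Hershberger two-point shortest path data structure that has already been built on $P$ in the preprocessing step. First I would query for the first edge $e_r$ of $\pi(p,r)$ leaving $p$ and for the first edge $e_s$ of $\pi(p,s)$ leaving $p$. If $e_r \neq e_s$, then the paths diverge at $p$ itself, so setting $a := p$ and comparing the orientations of $e_r$ and $e_s$ (viewed as vectors out of $p$) by a single cross-product sign decides whether $s$ lies to the right of $\pi(p,r)$.

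If $e_r = e_s$, the two paths share a nontrivial common prefix, and $a$ is the last vertex of that prefix. I would locate $a$ by walking through the canonical encoding produced by the Guibas–Hershberger structure: each shortest path is returned as a concatenation of $O(\log m)$ concave subchains drawn from the underlying balanced geodesic triangulation. Since both $\pi(p,r)$ and $\pi(p,s)$ emanate from the same source $p$, their encodings agree on an initial prefix of pieces; a parallel walk through the two encodings finds the first piece on which they disagree in $O(\log m)$ time, and a binary search along that shared concave subchain isolates $a$. With $a$ in hand, the same cross-product orientation test applied to the two edges at $a$ leaving toward $s$ and toward $r$ decides the side.

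\textbf{Main obstacle.} The delicate step is locating $a$ without ever materialising either shortest path in full, since $\pi(p,r)$ or $\pi(p,s)$ could have $\Omega(m)$ vertices and we have only an $O(\log m)$ budget. The key property that makes this feasible is that the Guibas–Hershberger structure (and the balanced geodesic triangulation on which it is built) gives a hierarchical piecewise representation of shortest paths in which the common prefix of two paths sharing a source point appears as a common prefix of pieces. Once this common-prefix search is implemented, the rest of the argument collapses to an $O(1)$ predicate at $a$, and the degenerate case $s \in \pi(p,r)$ is ruled out by our general position assumption.
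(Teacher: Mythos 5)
Your high-level idea (find the vertex $a$ where $\pi(p,s)$ and $\pi(p,r)$ diverge, then decide the side by a local test at $a$) is a legitimate alternative route, but the step you yourself flag as the main obstacle is exactly where the argument breaks. The Guibas--Hershberger query decomposes a path into $O(\log m)$ canonical subchains and tangent edges, but which canonical pieces appear depends on the destination: the pieces of $\pi(p,r)$ come from hourglasses along the decomposition between the triangles of $p$ and $r$, those of $\pi(p,s)$ from a different sequence, and the geometric common prefix of the two paths need not be a common prefix of pieces in the two encodings (the divergence vertex can sit in the middle of unrelated pieces, and the tangent edges computed during the two queries differ). So the ``parallel walk over aligned encodings'' is an unsubstantiated property, not something you can cite from the structure. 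The straightforward fix --- binary search along $\pi(p,r)$ for the last vertex $v$ with $d(p,v)+d(v,s)=d(p,s)$ --- costs $O(\log m)$ per distance predicate and hence $O(\log^2 m)$ overall, which blows the stated budget; achieving $O(\log m)$ for the divergence vertex needs a genuinely different mechanism that you have not supplied. Two smaller issues: the local test at $a$ needs the incoming edge of $\pi(p,r)$ at $a$ as well (a single cross product of the two outgoing edges does not by itself determine left/right of the path), and you cannot discard the case $s\in\pi(p,r)$ by general position --- the paper's general position assumptions concern equidistance to sites, and this lemma is later applied (in the intersection lemma) to polygon vertices that do lie exactly on $\pi(p,r)$, where ``on the path'' must be a possible answer.

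For contrast, the paper never computes $\pi(p,s)$ at all: it queries Guibas--Hershberger once for $\pi(p,r)$, uses the $O(\log m)$ hourglass diagonals to identify (by index comparisons) which of the induced subpolygons contains $s$, binary-searches the funnel formed by the first hourglasses to classify $s$ as left/inside/right, and then, while performing the usual concatenation steps of the query algorithm, spends only $O(1)$ extra work per inner-tangent computation to maintain that classification. This keeps the total at $O(\log m)$ precisely because the side information is piggybacked on the existing query algorithm rather than extracted by a separate search over a second path.
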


\begin{proof}
  Let $v_1,..,v_m$ denote the vertices of $\POld$ ordered along the
  boundary of $\POld$. We query the Guibas and Hershberger data
  structure~\cite{2PSP_simple_polygon} for the shortest path $\pi(p,r)$. This data
  structure can actually report a sequence of $O(\log m)$ hourglasses
  $H_1,..,H_{O(\log m)}$ so that $\pi(p,r)$ is the path that we obtain
  by concatenating these hourglasses. An \emph{hourglass} $H_i$ is
  defined by two diagonals $d_i$ and $d_{i+1}$ of $\POld$ and is the
  union of all shortest paths between pairs of points in
  $d_i \times d_{i+1}$. These $O(\log m)$ diagonals actually
  partition the polygon $\POld$ into $O(\log n)$ subpolygons
  $P'_1,..,P'_{O(\log m)}$.

  We locate the triangle in the triangulation of $\POld$ used
  by the shortest path data structure, defined by vertices $v_a$,
  $v_b$, and $v_c$, containing $s$. Based on the indices $a$, $b$, and
  $c$ we can now test which subpolygon $P'_i$ contains the triangle and
  thus $s$. In particular, by comparing the indices to the indices of
  the endpoints of the diagonals $d_i$. This takes $O(\log m)$ time.

  We now concatenate the first $i$ hourglasses into a funnel $F$
  (essentially a degenerate hourglass) to diagonal $d_{i+1}$, and test
  if $s$ lies left of $F$, inside $F$, or right of $F$.  We can do
  this in $O(\log m)$ time by a binary search on the vertices of
  $F$.

  If $s$ lies outside of $F$, we already have our answer and we can
  return. If $s$ lies inside $F$, we iteratively concatenate the
  remaining hourglasses $H_{i+1},..,H_{O(\log m)}$ onto the
  funnel. Every concatenation step involves the computation of two
  inner tangents. By computing on which side of these tangents $s$ is,
  we can maintain whether $s$ lies left, inside, or right, of the
  funnel.

  In total, the time required to concatenate all hourglasses is just
  $O(\log m)$ (as this is just the query algorithm to compute
  $\pi(p,r)$). Testing on which sides of each of the inner tangents
  $s$ lies is just a constant amount of extra work per concatenation
  operation. Hence, the total query time remains $O(\log m)$.
\end{proof}

\begin{restatable}{lemma}{intersectionSP}\label{lem:intersection_SP}
    Let $p,r$ be two points on $\partial \POld$, and let $s,t \in \POld$. We can determine in $O(\log^2 m)$ time the first and last point of $\pi(s,t) \cap \pi(p,r)$, or conclude the paths are disjoint.
\end{restatable}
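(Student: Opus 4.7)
The plan rests on a clean structural fact: in a simple polygon the intersection $\pi(s,t)\cap\pi(p,r)$ of two shortest paths is either empty or a single connected subpath. Indeed, if two distinct points $x,y$ are both shared, then the subpath of $\pi(s,t)$ from $x$ to $y$ and the subpath of $\pi(p,r)$ from $x$ to $y$ are both shortest $x$-to-$y$ paths and hence coincide under the general position guarantee of uniqueness, so every intermediate point is shared as well. The intersection is therefore completely described by its first and last points along $\pi(s,t)$ parametrised from $s$ to $t$, and it suffices to compute those two points.

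To locate them I would first call the two-point shortest path structure of Guibas and Hershberger~\cite{2PSP_simple_polygon} to retrieve $\pi(s,t)$ in $O(\log m)$ time as a concatenation of $O(\log m)$ hourglass-subpaths $\sigma_1,\dots,\sigma_K$, each accessible through a balanced binary tree on its vertices. Using Lemma~\ref{lem:side_test_queries} I would then classify every junction $t_0,\dots,t_K$ between consecutive $\sigma_j$'s as lying left of, right of, or on $\pi(p,r)$, at a total cost of $O(\log^2 m)$. By the structural claim the resulting ``side profile'' has the shape $X^*\,M^*\,Y^*$ with $X,Y\in\{L,R\}$, from which the subpath $\sigma_j$ containing the first intersection point can be read off directly as the one in which the initial run of $X$ is first broken.

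Within $\sigma_j$ I would binary-search the vertex tree, applying Lemma~\ref{lem:side_test_queries} at each probed vertex, to identify either the first vertex of $\sigma_j$ on $\pi(p,r)$ or the unique edge of $\sigma_j$ that transversally crosses $\pi(p,r)$. In the latter case, a symmetric binary search on the hourglass decomposition of $\pi(p,r)$ pinpoints the crossing edge, after which the intersection point is computed in constant time. Repeating the procedure from the $t$ side returns the last intersection point. Each binary search makes $O(\log m)$ probes of $O(\log m)$ cost each, so the overall query time is $O(\log^2 m)$, as required.

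The main obstacle I foresee is the degenerate ``touch'' configuration in which $\pi(s,t)$ coincides with $\pi(p,r)$ along a subpath entirely contained inside a single $\sigma_j$ whose two junction endpoints carry the same non-$M$ label; this would hide the intersection from the coarse profile and destroy the monotonicity required for the inner binary search. I would resolve this by noting that the entry and exit of such a hidden coincident subpath must be reflex vertices of $P$ lying on $\pi(p,r)$ and inside the hourglass of $\sigma_j$, so their existence can be certified in $O(\log m)$ by combining Lemma~\ref{lem:side_test_queries} with the point-location subdivision already attached to the Guibas--Hershberger representation of $\pi(p,r)$; only for flagged $\sigma_j$'s is an interior search triggered, and since the structural claim allows at most one such $\sigma_j$, the nested work stays within the $O(\log^2 m)$ budget. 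Verifying correctness of this degenerate case is the subtlest part of the proof; the rest is a routine combination of hourglass retrieval, side tests, and binary search.
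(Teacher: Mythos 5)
Your transversal-crossing case is sound and essentially matches the paper's handling of the situation where $s$ and $t$ lie on different sides of $\pi(p,r)$: there the side labels along $\pi(s,t)$ are monotone, so a binary search driven by Lemma~\ref{lem:side_test_queries}, followed by a second search (using the path extended to $\partial P$, i.e.\ $\pi(s',t')$, with the roles of the two paths swapped) to locate the crossed edge of $\pi(p,r)$, yields $a$ and $b$ in $O(\log^2 m)$ time. The genuine gap is exactly the degenerate ``touch'' case you flag, and your proposed fix is not an algorithm. When $s$ and $t$ lie on the same side, the label sequence along $\pi(s,t)$ has the shape $R^{*}M^{*}R^{*}$, so there is no monotone predicate to binary search on; you correctly observe that if the $M$-block is hidden strictly inside one hourglass subchain $\sigma_j$ the junction profile cannot see it, but then (i) the claim that its existence ``can be certified in $O(\log m)$'' by combining Lemma~\ref{lem:side_test_queries} with the point-location structure attached to the Guibas--Hershberger representation is unsubstantiated --- deciding whether a convex subchain of $\pi(s,t)$ shares a point with $\pi(p,r)$ is precisely the problem being solved, and neither the side test nor that point-location structure answers it directly; and (ii) even granted such a certificate, you give no way to \emph{locate} the coincident block: every probe of a vertex of $\sigma_j$ outside it returns the same answer (``right''), so a binary search has no direction in which to recurse.

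The paper escapes this by changing the search path rather than the predicate: it proves that $\pi(s,r)$, the shortest path from $s$ to the far endpoint $r$ of the other path, passes through the first common point $a$ and from there on coincides with $\pi(p,r)$. Along $\pi(s,r)$ the predicate ``lies on $\pi(p,r)$'' is therefore monotone, and a binary search with Lemma~\ref{lem:side_test_queries} finds $a$; a symmetric search along $\pi(t,p)$ finds $b$; and a final side test against the extended path $\pi(s',t')$ discards false positives in case $\pi(s,t)\cap\pi(p,r)=\emptyset$. To repair your proof you would either need this idea (or an equivalent re-parametrisation restoring monotonicity), or a genuinely new $O(\log m)$-probe procedure for intersecting a convex chain with a shortest path; as written, the case you yourself identify as the subtlest remains open.
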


\begin{proof}
    Let $a$ and $b$ be the first and last intersection point of the two shortest paths, if they exist, as in Figure~\ref{fig:touching_boundary}. Note that two shortest paths intersect in at most one connected subpath, because otherwise either path could be shortcut via the other path.
    We will use Lemma~\ref{lem:side_test_queries} as a subroutine to find $a$ and $b$. This lemma states that, given the Guibas and Hershberger data structure~\cite{2PSP_simple_polygon}, we can test in $O(\log m)$ time whether a point lies to the right of a shortest path between two points on $\partial P$. We first apply Lemma~\ref{lem:side_test_queries} to test whether $s$ and $t$ lie on the same or different sides of $\pi(p,r)$. In case one or both of $s$ and $t$ lies on $\pi(p,r)$, we consider this as them being on different sides. Next, we discuss how to handle these two cases.
\begin{figure}
    \centering
    \includegraphics[page=3]{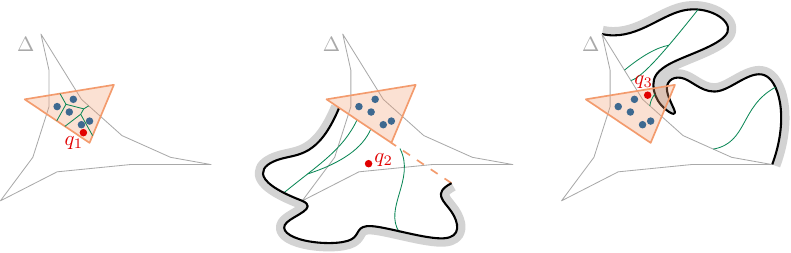}
    \caption{The shortest paths $\pi(p,r)$ and $\pi(s,t)$ overlap between $a$ and $b$.}
    \label{fig:touching_boundary}
\end{figure}

    If $s$ and $t$ lie on different sides of (or on) $\pi(p,r)$, we assume without loss of generality that $s$ is right of (or on) $\pi(p,r)$. We perform a binary search on the vertices of $\pi(s,t)$ to find the first vertex $v$ that is no longer to the right of $\pi(p,r)$ using Lemma~\ref{lem:side_test_queries} as a subroutine. A shortest path query to the  Guibas and Hershberger data structure essentially gives us a chain tree representing $\pi(s,t)$ that we can use to access a vertex in $O(\log m)$ time.
    Note that every vertex of $\pi(s,t)$ after $v$ must be on or left of $\pi(p,r)$ as well, as shortest paths can intersect at most once. To find $a$, we also need to find the edge of $\pi(p,r)$ that is intersected by $\pi(s,t)$. We extend the first and last segment of $\pi(s,t)$ to points $s'$ and $t'$ the boundary of $\POld$ (as illustrated in Figure~\ref{fig:touching_boundary}) using a ray shooting query. To find the edge of $\pi(p,r)$ that is intersected by $\pi(s,t)$, we now perform a binary search on $\pi(s',t')$ as before. Using the edges of $\pi(s,t)$ and $\pi(p,r)$ that intersect, we can compute the location of $a$ in constant time. We can find $b$ in a symmetric way. Note that $b$ only exists if $a$ and $b$ are both vertices of $\pi(p,r)$, and thus the second binary search to find the exact intersection is obsolete.

    If $s$ and $t$ are on the same side of $\pi(p,r)$, then $\pi(s,t)$ might share a subpath with $\pi(p,r)$, as in Figure~\ref{fig:touching_boundary}. In this case we cannot perform a binary search on $\pi(s,t)$, as all vertices lie on the same side of (or on) $\pi(p,r)$. Instead, we binary search on the vertices of $\pi(s,r)$. We next prove the claim that $\pi(s,r) = \pi(s,a) \cup \pi(a,r)$. 
    
    Once $\pi(s,r)$ intersects $\pi(p,r)$, it must follow this shortest path up to $r$, as it cannot intersect the path twice. The path $\pi(s,r)$ thus cannot intersect $\pi(p,r)$ before $a$, as this would imply the subpath $\pi(s,a)$ is not a shortest path. Now, suppose that $\pi(s,r)$ would not intersect $a$, i.e. it would join $\pi(p,r)$ at a later vertex. Consider the point $t'$ on $\partial \POld$ that is hit when extending the last segment of $\pi(s,t)$. Then $\pi(s,r)$ must intersect $\pi(s,t')$ somewhere to reach $r$. However, as both $\pi(s,t')$ and $\pi(s,r)$ start at $s$, this means that these shortest paths must coincide up to their intersection point, and thus $a$ is on $\pi(s,r)$. The claim follows.
    
    As each vertex of $\pi(s,r)$ after $a$ is on $\pi(a,r)$, we can perform a binary search on $\pi(s,r)$ to find $a$ by applying Lemma~\ref{lem:side_test_queries} in each step. Similarly, we can find $b$ using a binary search on $\pi(t,p)$. Finally, we need to check whether $a$ and $b$ indeed lie on $\pi(s,t)$, as we might find false positives in case $\pi(s,t) \cap \pi(p,r) = \emptyset$. For example, by applying Lemma~\ref{lem:side_test_queries} again to $\pi(s',t')$ and $a$, where $s'$ is obtained by extending the first segment of $\pi(s,t)$ up to $\partial \POld$.

    In both cases, we perform a binary search on a shortest path where each check takes $O(\log m)$ time. The running time is thus $O(\log^2m)$.
\end{proof}

\begin{lemma}\label{lem:number_of_type_ab_queries}
    A bounded nearest neighbor query can be answered using $O(\log \mOld)$ type (a) and~(b) queries. We can determine these $O(\log \mOld)$ queries in $O(\log^2 \mOld)$ time.
\end{lemma}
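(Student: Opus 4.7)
My plan is to traverse the balanced geodesic triangulation $\T$ top-down, classifying each node with respect to $R$. Starting at the root (where I issue a type (a) query), whenever I visit a node $\delta$, I process each child $x$ by testing the child's subpolygon $P_x$ against $R$: if $P_x \cap R = \emptyset$, I prune; if $P_x \subseteq R$ and $q \notin P_x$, I issue a type (b) query on the edge $(\delta,x)$ and stop; otherwise I issue a type (a) query on $x$ and recurse into $x$.

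For correctness, fix any site $s \in S \cap R$ and let $\delta^*$ be the (highest) node with $s \in S_{\delta^*}$. If the traversal visits $\delta^*$, the type (a) query on $\delta^*$ contains $s$ in its input set $S_{\delta^*}\cap R$. Otherwise, along the root-to-$\delta^*$ path in $\T$ the traversal cannot have pruned any ancestor $x$ (since $s \in P_x \cap R$), so it must have handed off some ancestor $x$ to a type (b) query; since $s$ lies in the geodesic triangle of $\delta^*$ and hence in $P_x = \bigcup_{y\in \T_x} y$, that query returns $s$ among its candidates.

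To bound the number of queries, I rely on the fact that the Guibas--Hershberger representation decomposes any shortest path in $P$ into $O(\log m)$ hourglasses of the balanced geodesic triangulation, so each boundary shortest path of $R$ crosses only $O(\log m)$ nodes of $\T$. Since $\partial R$ consists of $O(1)$ shortest paths, only $O(\log m)$ nodes have their geodesic triangle crossed by $\partial R$. These, together with the $O(\log m)$ nodes on the root-to-$q$ path in $\T$ (along which no type (b) handoff is possible because $q \in P_x$), are exactly the visited nodes, yielding $O(\log m)$ type (a) queries. Each visited node has $O(1)$ children and each contributes at most one type (b) query, so we obtain $O(\log m)$ type (b) queries as well.

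The main obstacle is achieving the $O(\log^2 m)$ time bound: applying Lemma~\ref{lem:intersection_SP} at every child of every visited node would cost $O(\log^3 m)$. To avoid this, I trace each of the $O(1)$ boundary paths of $R$ through $\T$ using the hourglass decomposition directly, which exposes the $O(\log m)$ crossed nodes along each path in $O(\log m)$ time per path, and I classify each child of a visited node using only $O(1)$ side tests from Lemma~\ref{lem:side_test_queries}, each costing $O(\log m)$. With $O(\log m)$ visited nodes and $O(\log m)$ work per visit, the total cost to determine all queries is $O(\log^2 m)$.
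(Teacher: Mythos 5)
Your top-down scheme is sound at the level of correctness: partitioning $\T$ into visited nodes, pruned subtrees with $P_x\cap R=\emptyset$, and handed-off subtrees with $P_x\subseteq R$ and $q\notin P_x$ does cover every site of $S\cap R$ by a type (a) or type (b) query, and this is a legitimately different organization from the paper's (which collects the $O(\log m)$ nodes on the root paths of the endpoints of the bounding shortest paths and, for a cone, a single splitting edge). The gap is in the second half of the lemma, the $O(\log^2 m)$ bound for \emph{finding} the queries, and in the count of visited nodes. First, you repeatedly identify the Guibas--Hershberger hourglass decomposition with the balanced geodesic triangulation $\T$: the hourglasses returned by a two-point query are defined by the diagonals of that structure's own hierarchy, not by the sides of the triangles of $\T$, so neither the claim that a bounding path of $R$ meets only $O(\log m)$ triangles of $\T$ nor, more seriously, the claimed $O(\log m)$-time ``tracing'' of a bounding path through $\T$ follows from it. The crossing bound itself is true, but it needs the paper's argument (a shortest path entering a triangle off the two root paths would have to cross a side $\pi(a,b)$ with $a,b\in\partial P$ twice); and there is no $O(\log m)$-time primitive for walking a shortest path through $\T$ --- this is precisely why the paper spends $O(\log^2 m)$, using point location, an LCA, and Lemma~\ref{lem:intersection_SP} to locate the first and last intersection of the bounding path with the separator above the LCA.

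Second, the crux of your time bound is that each child can be classified as prune/handoff/recurse with $O(1)$ side tests, and this is asserted without the case analysis. Deciding $P_x\subseteq R$ or $P_x\cap R=\emptyset$ is a region-versus-region test: side tests on the corners of $R$ against the shared side $\pi(a,b)$ only show (via the fact that two shortest paths cross at most once) that no bounding path of $R$ enters the interior of $P_x$; one must still separate $P_x\cap\mathrm{int}(R)=\emptyset$ from the wrap-around case $P_x\subseteq R$ for cones, cope with the apex $v_R$ possibly not lying on $\partial P$ (Lemma~\ref{lem:side_test_queries} requires both path endpoints on the boundary), and handle bounding paths that merely touch or overlap $\pi(a,b)$. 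Moreover, the classification must be exact (or one-sided in the right way), because your $O(\log m)$ bound on visited nodes presumes it: with your rule as written, a child whose subpolygon meets $R$ only in a corner or along a touching stretch of a crossed triangle's side is recursed into, and such touch-only intersections can cascade down an entire root-to-leaf path of $\T$, giving $\omega(\log m)$ visited nodes and hence too many type (a) queries and more than $O(\log^2 m)$ classification work. You also assert, rather than prove, that every visited node with $q\notin P_x$ has its own triangle met by a bounding path of $R$ (this is provable, but it is exactly the kind of argument the lemma needs). As it stands, the $O(\log^2 m)$ construction bound is not established.
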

\begin{proof}
We first prove the first part of the lemma, and then show how to find
these type (a) and (b) queries
efficiently. Figure~\ref{fig:type_b_queries} illustrates which nodes
and edges we query. Let $(R,q)$ be a bounded nearest neighbor
query. The region $R$ is either bounded by three shortest paths, or by
two shortest path and part of $\partial \POld$. We first consider one
of these shortest path that bounds $R$. Let $\pi(s,t)$ be this
shortest path, and let $v(s)$ and $v(t)$ be geodesic triangles in $\T$
that contain $s$ and $t$ respectively. Let $\sigma$ be the subset of
nodes of $\T$ on the paths from $v(s)$ and $v(t)$ to the root of
$\T$. Furthermore, let $P_\sigma$ be the union of the geodesic triangles in~$\sigma$, i.e. $P_\sigma = \bigcup_{\delta \in \sigma} \delta$. We claim that $\pi(s,t)$ is contained $\POld_\sigma$. 
Indeed, consider a deltoid region $\Delta$ that is entered by $\pi(s,t)$ of a
geodesic triangle $\delta \notin \sigma$. Let $\pi(a,b)$ be the shortest path that is the boundary of $\delta$ through which $\pi(s,t)$ enters $\Delta$. Note that by definition $a$ and~$b$ are on~$\partial \POld$. So, because $\pi(s,t)$ starts and
ends in $\POld_\sigma$,
it should at some point reenter $\POld_\sigma$ by crossing
$\pi(a,b)$ again. This contradicts $\pi(s,t)$ being a shortest path.

\begin{figure}
    \centering
    \includegraphics{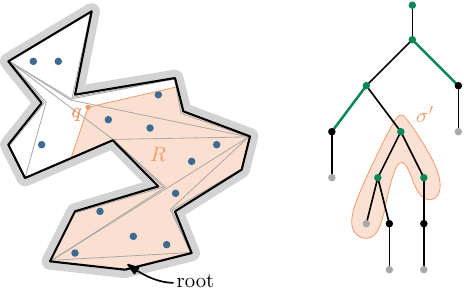}
    \caption{A geodesic triangulation and its dual tree. Nodes with empty deltoid regions are gray. We perform a type (a) query for all green nodes, and a type (b) query for all green edges.}
    \label{fig:type_b_queries}
\end{figure}

Let $\sigma'$ be the the union of the $\sigma$ sets for the shortest
paths bounding $R$, excluding any geodesic triangles that intersect
none of the shortest paths bounding $R$. Observe that for any point $p \in P$ the
subgraph of nodes in $\T$ whose geodesic triangle contains $p$ is
connected. As each pair of shortest paths bounding $R$ has a common
endpoint, the subgraph of $\sigma'$ is also connected. For each
$\delta \in \sigma'$, we perform a type (a) query. Because $\T$ has
height $O(\log \mOld)$, we perform $O(\log \mOld)$ of these
queries. Together, these queries find the nearest neighbor of $q$
in~$R$ for all geodesic triangles whose intersection with one of the
shortest paths bounding~$R$ is non-empty. Clearly, any node whose
deltoid region is intersected by such a shortest path is in
$\sigma'$. Any site that is on the boundary of a geodesic triangle
intersected by such a shortest path is stored in the highest node in
$\T$ that contains the site. As this subset of nodes is connected,
some node in $\sigma'$ must contain this site.

What remains is to query the geodesic triangles that are fully contained in $R$. If $R$
is a geodesic triangle, no node in~$\T$ can be contained in the
interior of $R$, as there are no vertices of $\POld$ in its interior,
and the corners of the geodesic triangle are vertices of~$\POld$. If
$R$ is a geodesic cone, there might be $O(\mOld)$
nodes in $\T$ contained inside $R$ that do not
intersect the two bounding paths.  Because~$R$ is connected,
there is an edge $(\delta,x)$ in $\T$, with $\delta$ the parent of $x$,
whose removal splits $\T$ into two subtrees, one of
which contains exactly the geodesic triangles that are fully contained
in~$R$. If this subtree is $\T_x$ (recall that $\T_x$ is the subtree rooted at $x$), then we perform a
type (b) query on $(\delta,x)$. If the subtree is $\T \setminus \T_x$, then for each node
on the path from $\delta$ to the root, we perform a type (a) query and
a type (b) query for the incident edge not on this path.

Finally, we prove the second part of the lemma. We first compute the
set $\sigma'$. Consider one of the shortest paths $\pi(s,t)$ that
bounds $R$. We first perform a point location query for both $s$ and
$t$. This gives us two nodes $v(s),v(t)$ in $\T$ that contain $s$ and
$t$. We then find the lowest common ancestor $w$ of $v(s)$ and $v(t)$
in $O(\log \mOld)$ time. All nodes on the paths from $v(s)$ and $v(t)$
to $w$ are in $\sigma'$. Note that the previous argument that
$\pi(s,t)$ cannot cross a geodesic triangle boundary and return even
implies that $\pi(s,t)$ is contained in the subpolygon correspoding to
these nodes. However, there may be more nodes on the path from $w$ to
the root of $\T$ whose boundary intersects $\pi(s,t)$. Let $\pi(p,r)$ be
the shortest path that is the boundary between $w$ and its parent
geodesic triangle. 
Lemma~\ref{lem:intersection_SP} implies that we can
find the first point $a$ and the last point $b$ of $\pi(s,t) \cap \pi(p,r)$, or conclude the paths are disjoint, in $O(\log^2m)$
time.
Any geodesic triangle in $\T$ above $w$ that intersects $\pi(s,t)$ intersects either $a$ or $b$ (or both).
It follows we can find the highest node of $\T$ that intersects $\pi(s,t)$ by a point location query on $a$ and $b$, and choosing the highest node of the returned nodes. All nodes on the path in $\T$ from $w$ and to this node are in $\sigma'$. By following the same procedure for all shortest path bounding $R$, we can compute $\sigma'$ in $O(\log^2 m)$ time.

If $R$ is a geodesic triangle, we are finished. If $R$ is instead a
geodesic cone, we need to find the edge whose removal splits off the
fully contained geodesic triangles as described earlier. Note that
this edge must be incident to $\sigma'$. Let $s$ and $t$ be the
endpoints other than $q$ of the shortest paths that bound $R$. We
first find the subsequence of polygon vertices contained in~$R$ by a
point location query for $s$ and $t$. By storing the boundary of $\POld$
in a balanced tree, we can then find the subsequence of vertices
between the edges containing $s$ and $t$ in $O(\log \mOld)$ time. For each node in
$\sigma'$, we check whether the incident edge that is not in
$\sigma'$ is contained in $R$ by checking whether its endpoints are
contained in the subsequence in $O(\log \mOld)$ time. We thus find the
desired edge of $\T$ in $O(\log^2 \mOld)$ time. The required type (a)
and (b) queries can then be found by traversing the tree as described
before in $O(\log \mOld)$ time.
\end{proof}

\subsection{Second level data structures}

For each node $\delta \in \T$, we store two data structures that together can answer type~(a) queries, and for each edge $(\delta,x)$ we store a data structure that can answer type~(b) queries.

\subparagraph{Data structures for type (a) queries.}
Fix a node $\delta \in \T$ and let $\Delta$ be its deltoid. For each shortest path bounding $\delta$, we store the sites of $S_\delta$ that are on the path in a binary search tree based on the order along the path. Together, these trees contain the sites in $S_\delta \setminus S_\Delta$.

On the sites in $S_\Delta$ we construct Chan's partition tree~\cite{Chan_partition_tree}. Each leaf node stores a constant number of sites, and each site is stored in exactly one leaf. 
For each node $\mu$ in the partition tree, let $S(\mu)$ denote the sites stored in the leaves below $\mu$ and let $\tri(\mu)$ denote the triangular cell of $\mu$ that contains $S(\mu)$. We store the following data structures for $\mu$, see Figure~\ref{fig:typeA}:
\newpage
\begin{enumerate}
    \item The geodesic Voronoi diagram of the sites $S(\mu)$ with
      respect to the polygon $\tri(\mu) \cap \Delta$.
    \item For each segment bounding $\tri(\mu)$: the forest
      representing the Voronoi diagram of $S(\mu)$ restricted to the
      subpolygon bounded by $\partial \POld$ and the chord that
      contains the segment that does not contain
      $S(\mu)$.
    \item For each shortest path $\pi$ bounding $\delta$: the forest representing the Voronoi diagram of $S(\mu)$ restricted to the subpolygon bounded by $\partial \POld$ and $\pi$ that does not contain $S(\mu)$.
\end{enumerate}
\begin{figure}
    \centering
    \includegraphics{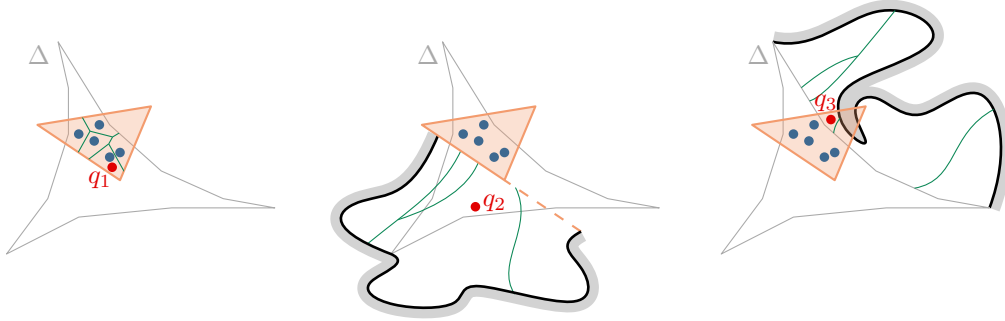}
    \caption{The three data structures stored for the orange cell $\tri(\mu)$ for answering type (a) queries. The first is used for a query point inside $\tri(\mu)\cap \Delta$, the second for a query point outside $\tri(\mu)$, and the third for a query point that is in $\tri(\mu)$ but outside of $\Delta$.}
    \label{fig:typeA}
\end{figure}

The first data structure is simply the standard geodesic Voronoi diagram~\cite{Geodesic_voronoi_diagram}.
For the second and third data structure, we adapt a data structure that Agarwal, Arge, and Staals~\cite{dynamic_geodesic_nn} use in their dynamic nearest neighbor data structure. They essentially prove that when a polygon is partitioned into two subpolygons $P_\ell, P_r$ by a line segment, the forest representing the Voronoi diagram of the sites in $P_\ell$ restricted to $P_r$ can be stored and queried efficiently. In Section~\ref{sec:forest_ds_shortest_path}, we generalize this result to the case where the polygon is partitioned using an arbitrary shortest path instead of a line segment, and obtain the following result. 

\begin{restatable}{theorem}{forestBySP}\label{thm:forest_by_sp}
    Let $P$ be a polygon with $m$ vertices appropriately preprocessed for two-point shortest path and ray-shooting queries. Let $P_\ell, P_r$ be a partition of $P$ into two subpolygons by a shortest path between two points on $\partial P$. Given a set of $n$ sites $S$ in $P_\ell$, there is an $O(n)$ size data structure storing the combinatorial
  representation of the Voronoi diagram of $S$ in $P_r$ so that given a query point $q \in P_r$, we can find the site in $S$ closest to $q$ in $O(\log n \log m)$
  time. Building the data structure takes $O(n \log^2 m + n \log n)$ time.
\end{restatable}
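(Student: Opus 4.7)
The plan is to adapt the forest data structure of Agarwal, Arge, and Staals~\cite{dynamic_geodesic_nn} for line-segment separators to the case where the separator is a shortest path $\pi$. The enabling structural observation is that $\pi$ is itself a shortest path in $P$: for any site $s \in S \subset P_\ell$ and any query point $q \in P_r$, the intersection $\pi_P(s,q) \cap \pi$ is a single connected piece, so $\pi_P(s,q)$ enters $P_\ell$ through a well-defined \emph{gateway}---either a transverse crossing at an interior point of an edge of $\pi$, or a reflex vertex of $\pi$ on the $P_\ell$-side. This mirrors the uniqueness that AAS exploit for a line-segment separator, so the overall shape of their construction should transfer.

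First I would establish the combinatorial bound: encoded as a forest with one tree per maximal run of Voronoi edges along $\partial P_r$, the Voronoi diagram of $S$ restricted to $P_r$ has complexity $O(n)$, independent of $m$. Although a bisector between two sites $s,s' \in P_\ell$ may meet $\pi$ at more than one point, the uniqueness of the gateway and the monotonicity of $d(s,\cdot)$ along $\pi$ keep the sites' distance functions tame enough for AAS's charging argument to apply. Next I would run their divide-and-conquer merge verbatim, replacing each constant-time Euclidean primitive---distance comparison, bisector/boundary intersection, side-of-bisector test---by its geodesic analogue, which is supported in $O(\log m)$ time by the two-point shortest-path data structure of Guibas and Hershberger~\cite{2PSP_simple_polygon} together with ray shooting on $P$. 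Each of the $O(\log n)$ merge levels processes $O(n)$ primitives for an $O(n \log n \log m)$ contribution, while a preliminary geodesic sort of the sites along $\pi$ adds $O(n \log n)$; together this matches the claimed $O(n \log^2 m + n \log n)$ construction time.

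To answer a query $q \in P_r$, I would point-locate $q$ in the forest by descending the relevant tree, making one $O(\log m)$-time geodesic distance comparison at each level; the tree depth is $O(\log n)$, so the query takes $O(\log n \log m)$ time.

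The main obstacle I expect is verifying that the AAS forest structure genuinely carries over when $\pi$ is a polyline rather than a straight segment. Because $\pi$ can bend at each of its reflex vertices, an individual bisector arc can cross $\pi$ several times, so one must show that, despite these extra crossings, the forest has total size $O(n)$ and each merge step costs only linearly many primitives. I expect this to follow from the taut nature of $\pi$, which forces a consistent ordering of gateway points along $\pi$, so that the planar sweep underlying AAS's merge still discovers each new Voronoi edge with $O(1)$ amortized primitives.
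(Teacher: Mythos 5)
Your proposal has two genuine gaps, and they correspond exactly to the two things the paper actually has to prove. First, the structural foundation: you concede that ``a bisector between two sites $s,s' \in P_\ell$ may meet $\pi$ at more than one point'' and hope that a gateway/charging argument still yields an $O(n)$-size forest and a linear-work merge. The paper's route is the opposite: it proves (Lemma~\ref{lem:one_intersection}, via a nontrivial tangent-and-chord argument reducing to the chord case of Agarwal, Arge, and Staals) that the bisector of any two sites of $P_\ell$ crosses the separating shortest path $\lambda$ \emph{at most once}. This single-crossing property is what makes the restricted diagram a Hamiltonian abstract Voronoi diagram of the subsequence $S'$ of sites whose cells reach $P_r$, ordered along $\lambda$, and it is precisely this that lets the existing AAS construction machinery be reused to get the $O(n)$ size and the $O(n\log n + n\log^2 m)$ build time. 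Your plan never establishes this (or any substitute), so the forest size bound, the validity of the merge, and your connectedness-of-$\pi_P(s,q)\cap\pi$ observation do not add up to a proof; the ``main obstacle'' you flag at the end is exactly the unproved lemma. (Your explicit divide-and-conquer accounting also gives $O(n\log n\log m)$, which is not bounded by the claimed $O(n\log^2 m + n\log n)$ in general, but that is secondary.)

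Second, the query procedure is where the paper departs from AAS and your proposal has no workable replacement. AAS answer queries by monotone-subdivision point location, which fails here because once $\lambda$ bends, bisector pieces are not monotone in any direction; you do not address this. Your stated alternative --- ``descend the relevant tree, one $O(\log m)$ distance comparison per level, depth $O(\log n)$'' --- is incorrect as written: the Voronoi forest is not balanced (its depth can be $\Theta(n)$), and it is not specified how a single distance comparison at a Voronoi vertex identifies the correct subtree. The paper's solution is a centroid decomposition of the forest (after linking the trees along $\partial P_r$ into one tree), together with a geometric lemma (Lemma~\ref{lem:single_subtree}) that the three shortest paths from a centroid Voronoi vertex $c$ to its three defining sites partition $P_r$ into subpolygons $P_1,P_2,P_3$, each containing exactly one subtree, plus an $O(\log m)$-time test (Lemmas~\ref{lem:side_test_queries} and~\ref{lem:query_in_subpolygon}, handling the convex/reflex cases at $c$) to decide which $P_i$ contains $q$. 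That combination is what yields $O(\log n\log m)$ query time; without it, your query bound is unsupported.
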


\subparagraph{Data structure for type (b) queries.}
Let $\delta$ be a node in $\T$ and $x$ be a child of $\delta$. To be
able to answer type (b) queries we store Voronoi diagram of the sites
in $S \cap P_x$ restricted to $P \setminus P_x$, for $P_x$ as in Definition~\ref{def:type_a_b}, in the forest data structure of Theorem~\ref{thm:forest_by_sp}, as the boundary between the two geodesic triangles is a shortest path.

\subsection{Analysis}
Next, we prove that the \conequery data structure achieves the bounds in Theorem~\ref{thm:conequery}. We use the following lemma about multilevel data structures using Chan's partition tree.
\begin{lemma}[Corollary 6.2 of~\cite{Chan_partition_tree}]\label{lem:canonical_subsets}
    Given $n$ points in $\R^2$, for any fixed $\gamma < 1/2$, we can form $O(n)$ canonical subsets of total size $O(n \log \log n)$ in $O(n \log n)$ time, such that the subset of all points inside any query triangle can be reported as a union of disjoint canonical subsets $C_i$ with $\sum_i |C_i|^\gamma \leq O(\sqrt{n})$ in time $O(\sqrt{n})$ w.h.p.$(n)$. 
\end{lemma}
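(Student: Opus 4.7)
The statement is cited as Corollary 6.2 of Chan's partition tree paper, so my plan would be to reconstruct its proof from the general machinery of hierarchical simplicial partitions. The plan is to build a partition tree with constant fan-out $r$ on the $n$ input points, where at each node $\mu$ storing a set $P_\mu$ of $n_\mu$ points, a simplicial partition of $P_\mu$ into $r$ subsets $P_{\mu_1},\dots,P_{\mu_r}$ is computed, each contained in a triangular cell $\tri(\mu_i)$, with the crossing number property that any line crosses at most $O(\sqrt{r})$ of the cells $\tri(\mu_1),\dots,\tri(\mu_r)$. This is Matou\v{s}ek-style (with Chan's improvement using cuttings in a randomized and recursive way to get optimal bounds).

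The canonical subsets are defined as $\{P_\mu\}_\mu$ ranging over all nodes of the tree. With fan-out $r$ chosen appropriately, a naive implementation gives total size $\Theta(n \log_r n)$. To shave this down to $O(n \log \log n)$, I would apply Chan's trick of using a \emph{doubly-recursive} construction: the fan-out grows geometrically with the depth, say $r_i = 2^{2^i}$ at depth $i$, so that the number of levels is $O(\log \log n)$ while the crossing-number bound compounds multiplicatively into the final $O(\sqrt{n})$ bound. Construction in $O(n \log n)$ time then follows from computing each simplicial partition via $\eps$-approximations in time proportional to the size of the set and adding up geometrically.

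For queries with a triangle $\Delta$, I would recurse top-down: at a node $\mu$, for each child $\mu_i$, classify $\tri(\mu_i)$ with respect to $\Delta$ as \emph{inside} (add $P_{\mu_i}$ to the answer as a canonical subset), \emph{outside} (discard), or \emph{crossed} (recurse). Since $\Delta$ has three bounding lines, only $O(\sqrt{r})$ children are crossed per node, so the query visits $O(\sqrt{n})$ nodes in total, and every canonical subset that is output comes from a node at some depth $i$ whose cell is fully inside $\Delta$ but whose parent cell is crossed by $\partial \Delta$. A standard geometric-series argument over the depths, using that a crossed node at depth $i$ contributes subsets of size roughly $n/\prod_{j\le i} r_j$ and that the number of crossed nodes at depth $i$ is $O(\prod_{j\le i} \sqrt{r_j})$, yields $\sum_i |C_i|^\gamma = O(n^\gamma \cdot n^{1/2 - \gamma}) = O(\sqrt{n})$ for any fixed $\gamma < 1/2$, with the high-probability bound inherited from the randomized cutting construction.

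The main obstacle is exactly balancing the two competing quantities: the total storage $\sum_\mu |P_\mu|$, which wants few levels and thus large fan-out, and the $\sum_i |C_i|^\gamma$ bound, which wants the crossing-number product to behave like $\sqrt{n}$ rather than $n^{1/2+\eps}$. Chan's resolution of this tension via the doubly-exponential fan-out schedule is the delicate step; everything else (building simplicial partitions, bookkeeping, standard top-down query) is routine given Matou\v{s}ek's simplicial partition lemma as a black box.
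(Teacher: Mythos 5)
This lemma is not proved in the paper at all: it is imported verbatim as Corollary~6.2 of Chan's optimal partition tree paper~\cite{Chan_partition_tree}, so there is no in-paper argument to compare against, and your proposal has to be judged as a reconstruction of Chan's result. As such it has a genuine gap, and it sits exactly at the step you defer as ``the delicate step.'' If you build the tree by applying Matou\v{s}ek's simplicial partition theorem independently at each node, then every level multiplies the number of cells a line can cross by $C\sqrt{r_i}$ for some absolute constant $C>1$. Over $L$ levels a line therefore crosses up to $C^{L}\sqrt{\prod_i r_i}\approx C^{L}\sqrt{n}$ cells. Your doubly-exponential fan-out schedule $r_i=2^{2^i}$ does fix the \emph{storage} bound (it gives $L=\Theta(\log\log n)$ levels, hence total canonical-subset size $O(n\log\log n)$), but it does not remove the per-level constant: the crossing bound becomes $C^{\Theta(\log\log n)}\sqrt{n}=(\log n)^{\Theta(1)}\sqrt{n}$, so both the query time and the bound $\sum_i |C_i|^{\gamma}$ come out as $\sqrt{n}\,\mathrm{polylog}(n)$, not $O(\sqrt{n})$. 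No fan-out schedule alone can repair this, since any schedule with $\omega(1)$ levels accumulates a superconstant factor; this is precisely the obstruction that made ``optimal'' partition trees an open problem after Matou\v{s}ek.

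Chan's actual proof is not a level-by-level composition with a clever fan-out; it constructs the whole hierarchy in one global randomized process, using \emph{partial} simplicial partitions refined against a hierarchy of cuttings, with a charging argument that bounds the total number of crossings of any line against the entire tree at once (this is also where the w.h.p.\ guarantee and the $O(n\log n)$ construction time come from). So your outline reproduces the shape of the statement and correctly accounts for the $O(n)$ subsets, the $O(n\log\log n)$ total size, and the top-down query with inside/outside/crossed classification, but the central claim that the crossing numbers ``compound multiplicatively into the final $O(\sqrt{n})$ bound'' is false for the construction you describe. (As a side remark, the weaker $\sqrt{n}\,\mathrm{polylog}(n)$ version you would actually obtain would only add polylogarithmic factors to the query bounds where this lemma is used, but it does not prove the lemma as stated.)
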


\subparagraph{Query time.}
Let $(R,q)$ be a bounded nearest neighbor query. Lemma~\ref{lem:number_of_type_ab_queries} states that we can answer the query using $O(\log \mOld)$ type (a) and (b) queries and that we can compute these queries in $O(\log^2 \mOld)$ time. What remain is to analyze the time of a type (a) and (b) query.

First, consider a type (a) query. Recall that this asks for the
nearest neighbor of $q$ in $S_\delta \cap R$ for a geodesic triangle
$\delta \in \T$ with deltoid $\Delta$. To answer this query we have to
query the three binary trees storing the sites in $S_\delta \setminus
S_\Delta$ and the partition tree on $S_\Delta$. For a shortest path of
$\partial \delta$, the sites on this path that are contained in $R$
form a contiguous subsequence. We can thus obtain this subset of sites
in a binary tree in $O(\log n)$ time. Because the distance from $q$ to
a shortest path is a convex function, we can then binary search in this subtree to find the site closest to $q$ in $O(\log n \log m)$ time.

To find the closest site of $q$ in $S_\Delta$, we query the partition tree. However, we cannot apply Lemma~\ref{lem:canonical_subsets} directly, as this requires the query region to be a triangle instead of a geodesic triangle or cone. Oh and Ahn~\cite{dynamic_convex_hull} essentially prove that seven triangles can cover a geodesic triangle without including any sites of $S_\Delta$ outside of the geodesic triangle in their interior. Here, we need a slightly more general statement that also includes geodesic cones.

\begin{restatable}{lemma}{coveringWithTriangles}\label{lem:constant_number_triangles}
    We can construct in $O(\log^2 \mOld)$ time a constant number of interior disjoint triangles whose union contains $\Delta \cap R$ but does not contain any sites in $S_\Delta \setminus R$.
\end{restatable}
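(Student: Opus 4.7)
The plan is to reduce to the Oh--Ahn geodesic-triangle covering~\cite{dynamic_convex_hull} when $R$ is a geodesic triangle, and to extend that construction when $R$ is a geodesic cone. The running time will follow because the whole construction is driven by a constant number of shortest-path primitives, each of which we have already shown takes $O(\log^2 m)$ time.

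If $R$ is a geodesic triangle, Oh and Ahn directly yield a constant number of interior-disjoint triangles that cover $R$ and whose union avoids every site of $S_\Delta \setminus R$; since $\Delta \cap R \subseteq R$, these triangles already satisfy the lemma. If $R$ is a geodesic cone bounded by $\pi(v_R, s)$, $\pi(v_R, t)$, and the arc $\partial P[s,t]$, I mimic Oh--Ahn's chain-covering on each of the two bounding shortest paths: extending the first and last edges of each concave chain and clipping against the chord from $v_R$ to $s$ (resp.\ $t$) and against the chord $[s,t]$, I obtain a constant number of triangles that jointly cover the portion of $\Delta \cap R$ on the $v_R$-side of $[s,t]$. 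By the same side-of-chain reasoning as Oh--Ahn these triangles stay on the cone side of each bounding path and therefore avoid the sites of $S_\Delta \setminus R$ that live across the two bounding chains. The remaining strip of $\Delta \cap R$, trapped between $[s,t]$ and $\partial P[s,t]$, is covered by one further triangle whose third vertex I place far outside $P$: because $S_\Delta \subseteq P$, the part of this triangle lying outside $P$ contains no site, and the part lying inside $P$ fills the strip.

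For the running time, the vertices and edges of the at most two bounding shortest paths are retrieved from the Guibas--Hershberger data structure~\cite{2PSP_simple_polygon} in $O(\log m)$ time per access, and all chord intersections with $\partial \Delta$ or between bounding paths are obtained from Lemma~\ref{lem:intersection_SP} in $O(\log^2 m)$ time each. Only a constant number of such primitives are invoked, which gives the claimed $O(\log^2 m)$ bound. The main obstacle I anticipate is precisely the cone case: showing that the Oh--Ahn-style chain triangles, the central chord triangle, and the boundary-strip triangle can be clipped so as to remain pairwise interior-disjoint while still jointly covering $\Delta \cap R$ and never enclosing a site of $S_\Delta \setminus R$ — in particular ruling out the possibility that a site of $S_\Delta$ sits inside a chain triangle on the far side of the chord $[s,t]$ from $v_R$. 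Because the combinatorial shape of a geodesic cone is fixed, all the clipping can be handled by a case analysis that yields an absolute constant bound on the number of triangles.
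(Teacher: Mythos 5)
Your triangle case is acceptable in spirit (it is exactly what the paper attributes to Oh and Ahn), but the cone case --- which is the only thing this lemma adds over their result --- has a genuine gap, and it comes from never using the structural fact the whole argument should rest on: the interior of the deltoid $\Delta$ contains no vertices of $P$. Because of this, every shortest path bounding $R$ is a \emph{straight segment} inside $\Delta$, and the pieces of $\partial\Delta$ appearing on $\partial(\Delta\cap R)$ are concave chains of $\Delta$ whose replacement by the chord joining their endpoints only adds area \emph{outside} $\Delta$, hence free of $S_\Delta$. This is what bounds $\Delta\cap R$ (after straightening) by nine segments, i.e.\ seven triangles, and it makes the cone case come for free, since $\partial P$ never enters the interior of $\Delta$ and so the boundary arc of the cone never contributes anything beyond $\partial\Delta$. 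Your construction instead anchors the triangles on the cone itself: the bounding chains of $R$ may have many reflex vertices, so ``extending the first and last edges of each concave chain'' and clipping is not shown to produce a constant number of triangles, nor to cover the claimed portion of $\Delta\cap R$, nor to avoid sites --- all of these are exactly what is easy once you restrict to $\Delta$ first and hard otherwise.

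The ``strip'' triangle is where the argument actually fails. The chord $[s,t]$ need not lie inside $P$, need not lie inside $R$, and can cut through $\Delta\setminus R$; a triangle erected on $[s,t]$ with an apex placed far outside $P$ can re-enter $P$ in several components, some of which may intersect $\Delta\setminus R$ and hence contain sites of $S_\Delta\setminus R$. So the claim that ``the part lying inside $P$ fills the strip'' and contains no forbidden site does not hold in general, and your closing appeal to ``a case analysis'' is precisely the missing proof. (Your notion of the ``$v_R$-side'' of the bounding paths is also not globally well defined, since $v_R$ may be interior to $P$ and the paths then do not split $P$.) The running-time part of your plan is fine and matches the paper: the $O(1)$ vertices of $\Delta\cap R$ can be found with Guibas--Hershberger queries and Lemma~\ref{lem:intersection_SP} in $O(\log^2 m)$ time; but the covering itself should be obtained by triangulating the straightened polygon $\Delta\cap R$ (at most nine edges, seven triangles), which handles geodesic triangles and cones uniformly, rather than by adapting the Oh--Ahn covering of $R$.
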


\begin{proof}
  There are no vertices of $\POld$ interior to $\Delta$, as each convex chain bounding $\Delta$ is a shortest path in $\POld$. It follows that any vertex of $\Delta \cap R$ must be a vertex of $\Delta$, or the intersection point of a shortest path bounding $R$ and $\partial \Delta$.
  The polygon $\Delta \cap R$ thus consists of at most three convex chains
  and six line segments. 
  If~$R$ is a geodesic triangle, these convex
  chains consist of edges of the boundary of both $\Delta$
  and~$R$. If~$R$ is a geodesic cone, then these edges are still
  boundary edges of $\Delta$, but they might not be edges of $R$. We
  reduce the complexity of this region by replacing each such convex
  chain by the line segment connecting the endpoints of the chain, see
  Figure~\ref{fig:geodesic_to_euclidean}. As the area bounded by this
  line segment and the convex chain is disjoint from $\Delta$, no
  additional sites of $S_\Delta$ are included in this polygonal
  region. The new region is bounded by nine line segments, and thus it
  can be triangulated using seven Euclidean triangles. Finding the
  vertices of $\Delta \cap R$ can be done in $O(\log^2m)$ time using
  the tree representations of the shortest paths obtained by querying the Guibas and Herschberger data structure~\cite{2PSP_simple_polygon}.
    \begin{figure}
        \centering
        \includegraphics[page=2]{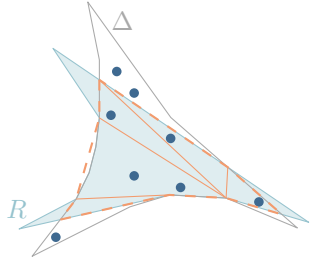}
        \caption{The five orange Euclidean triangles cover $\Delta \cap R$ without including any sites of $S_\Delta$ that lie outside of $R$.}
        \label{fig:geodesic_to_euclidean}
    \end{figure}
\end{proof}

For each of the $O(1)$ triangles given by Lemma~\ref{lem:constant_number_triangles} we query the partition tree. Lemma~\ref{lem:canonical_subsets} states that the partition tree gives us a set of disjoint canonical subsets $C_i$.
For each canonical subset, which corresponds to some node $\mu$ in the partition tree, we perform a nearest neighbor query using one of the data structures stored in the node. Which data structure we query depends on the location of $q$ in relation to $\Delta$ and $\tri(\mu)$, see Figure~\ref{fig:typeA}.

\begin{itemize}
    \item If $q$ is inside both the deltoid $\Delta$ and the cell $\tri(\mu)$, i.e. $q \in \Delta \cap \tri(\mu)$, we query the geodesic Voronoi diagram of $\Delta \cap \tri(\mu)$.
    \item If $q$ is outside the cell $\tri(\mu)$, i.e. $q \notin \tri(\mu)$, we query the forest constructed for the segment of $\tri(\mu)$ whose extension separates $q$ from $\tri(\mu)$.
    \item If $q$ is inside the cell $\tri(\mu)$ but outside the deltoid $\Delta$, i.e. $q \in \tri(\mu)$ and $q \notin \Delta$, we query the
      forest constructed for side of $\delta$ that separates $q$ from
      $\Delta$.
\end{itemize}

Consider a canonical subset $C_i$ for which the complexity of $\Delta$ restricted to the cell of $C_i$ is $m_i$. The first type of query, which is in the (regular) geodesic Voronoi diagram, takes $O(\log(|C_i|+m_i))$ time~\cite{Geodesic_voronoi_diagram}. The second and third type of query take $O(\log |C_i| \log \mOld)$ time by  Theorem~\ref{thm:forest_by_sp}. 
The query time thus becomes 
\begin{align*}
    \sum_i (\log |C_i| \log \mOld + \log ( |C_i| + m_i)) &= \log \mOld \sum_i \log |C_i| + \sum_i \log(|C_i| + m_i)\\
    &\leq \log \mOld \sum_i \log |C_i| + \sum_i (\log|C_i| + \log m_i)\\
    &= (\log \mOld + 1) \sum_i \log |C_i| + \sum_i \log \mOld.
\end{align*}
Here we used that for $|C_i|, m_i \geq 2$ it holds that $\log(|C_i| + m_i) \leq \log (|C_i| m_i)$. Choosing any $\gamma \in (0, 1/2)$ in Lemma~\ref{lem:canonical_subsets} implies that $\sum_i \log |C_i| = O(\sqrt{|S_\Delta|})$. The number of terms in the sum $\sum_i \log \mOld$ is bounded by $O(\sqrt{|S_\Delta|})$ as well. A type (a) query can thus be answered in $O(\sqrt{|S_\Delta|} \log \mOld)$ time.

Finally, we note that the $O(\log \mOld)$ type (a) queries we perform
are on distinct deltoids. Thus, if we sum over all deltoids $\Delta_i$
that we query, we have $\sum_i |S_{\Delta_i}| = O(n)$. To bound the sum of square roots over these set sizes, we use the HM-GM-AM-QM inequalities, which states that for positive reals $x_1,\dots,x_a$, we have $x_1 + \dots + x_a \leq \sqrt{a (x_1^2 + \dots + x_a^2)}$. It then follows that the total query
time of all type (a) queries is $\sum_{i = 0}^{O(\log \mOld)}
\sqrt{|S_{\Delta_i}|} \log \mOld = O(\sqrt{n} \log^{3/2} \mOld)$.

A type (b) query requires only a single query of the data structure of Theorem~\ref{thm:forest_by_sp}, and thus takes $O(\log n \log \mOld)$ time. The total query time of the type (b) queries is $O( \log n \log^2 \mOld)$.

\subparagraph{Construction time.}
As preprocessing, we build the shortest path data structure of Guibas and Hershberger in $O(\mOld)$ time~\cite{Chazelle_triangulate, 2PSP_simple_polygon}. We then build the level one data structure: a balanced geodesic triangulation together with a point location data structure, in $O(\mOld)$ time~\cite{Balanced_geodesic_triangulation}.

There are three level two data structures: one for each edge of $\T$ (for the type (b) queries), and two for each node of $\T$ (for the type (a) queries). Let $\delta$ be a node in $\T$ and $\Delta$ the corresponding deltoid. 

First, consider the data structure on each edge of $\T$. This is the
forest data structure of
Theorem~\ref{thm:forest_by_sp}. For a child $x$ of $\delta$ this data structure can be build in
$O(|S \cap P_x| \log^2 \mOld + |S \cap P_x| \log |S \cap P_x|)$ time,
with $P_x$ as in Definition~\ref{def:type_a_b}. Note that each site in
$S$ can occur in at most $O(\log \mOld)$ polygons $P_x$, once for each
level of $\T$. It follows that constructing the data structure for
each edge of $\T$ takes $O(n \log^3 \mOld + n \log n \log \mOld)$ time
in total.

Second, consider the two data structures on the nodes of $\T$. For each shortest path bounding $\delta$, we store the sites on the path in a binary search tree. These trees can be constructed in $O(n_\delta (\log n_\delta +\log \mOld))$ time. Furthermore, we store the sites in the deltoid $\Delta$ in a partition tree. The partition tree (excluding auxiliary data structures) can be build in $O(n_\Delta\log n_\Delta)$ time (Lemma~\ref{lem:canonical_subsets}).

For each node $\mu$ of the partition tree, we store three third level Voronoi data structures for the sites $S(\mu)$. Let $m(\mu)$ be the number of vertices of $\tri(\mu) \cap \Delta$. The first level three data structure is a (regular) geodesic Voronoi diagram restricted to $\mu$, including a point location data structure, and can be constructed in $O(m(\mu) + |S(\mu)| \log |S(\mu)|)$ time~\cite{Geodesic_voronoi_diagram}. The second and third data structures are the forest data structure of Theorem~\ref{thm:forest_by_sp} on each segment bounding the cell of $\mu$ and each shortest path bounding $\delta$. These can be constructed in $O(|S(\mu)| \log^2 \mOld + |S(\mu)|\log|S(\mu)|)$ time (Theorem~\ref{thm:forest_by_sp}). The total construction time for a node $\mu$ of the partition tree is then $O(m(\mu) + |S(\mu)| \log |S(\mu)| + |S(\mu)| \log^2 \mOld)$. To bound the total construction time for all nodes of the partition tree, observe that the cells in each of the $O(\log \log n_\Delta)$ levels of the partition tree are disjoint. So, $\sum_\mu |m(\mu)| = O(m_\Delta \log \log n_\Delta)$. Furthermore, $\sum_\mu |S(\mu)| = O(n_\Delta \log \log n_\Delta)$. The total construction time over all nodes is then $O(m_\Delta \log \log n_\Delta + n_\Delta \log \log n_\Delta (\log n_\Delta + \log^2 \mOld))$.

Together, the level two and three data structures on a node $\delta \in \T$ are constructed in $O(m_\delta \log \log n_\delta + n_\delta \log \log n_\delta (\log n_\delta + \log^2 \mOld))$ time. This implies that the total construction time for all nodes in $\T$ is $O(\mOld \log \log n + n (\log n \log \log n + \log^2 \mOld \log \log n))$. 

The total construction time of all first, second, and third level data structures then becomes 
$O(\mOld \log \log n + 
n \log n \log \log n + 
n \log^3 \mOld +
n \log n \log^2 \mOld +
n \log n \log \mOld) = O(\mOld \log \log n + 
n \log n \log \log n + 
n \log n \log^2 \mOld)$.

\subparagraph{Space usage.}
The space of the shortest path data structure on $\POld$ is $O(\mOld)$.
For a node $\delta \in \T$ and $\Delta $ the corresponding deltoid, the trees for $\partial \delta$ have size $O(n_\delta)$. As for the partition tree, Lemma~\ref{lem:canonical_subsets} states that the total size of the canonical subsets is $O(n_\Delta \log \log n_\Delta)$. The three data structures that are built for each node of the partition tree use just linear space in the number of sites, thus the total size of the partition tree including its auxiliary data structures remains $O(n_\Delta \log \log n_\Delta)$. The total space over all nodes in $\T$ is thus $O(n \log \log n)$.
What remains is to analyze the space of the data structures for type (b) queries. Theorem~\ref{thm:forest_by_sp} states that this forest data structure for a child $x$ of $\delta$ uses $O(|S \cap P_x|)$ space. As before, each site in $S$ can occur in at most $O(\log \mOld)$ polygons $P_x$, once for each level of $\T$. Thus the total space of these data structures is $O(n \log \mOld)$.

\renewcommand{\POld}{\ensuremath{\tilde{P}}}
\renewcommand{\mOld}{\ensuremath{\tilde{m}}}
\renewcommand{\SOld}{\ensuremath{\tilde{S}}}
\renewcommand{\nOld}{\ensuremath{\tilde{n}}}

\section{Generalizing the geodesic Voronoi diagram data structure to
  shortest paths}
\label{sec:forest_ds_shortest_path}

\renewcommand{\mOld}{\ensuremath{m}}
\renewcommand{\POld}{\ensuremath{P}}
\renewcommand{\SOld}{\ensuremath{S}}
\renewcommand{\nOld}{\ensuremath{n}}
Throughout this section, let $\POld$ be a static simple polygon with
$\mOld$ vertices. We assume that $\POld$ is appropriately preprocessed
for two-point shortest path queries~\cite{2PSP_simple_polygon} and
ray-shooting queries~\cite{Balanced_geodesic_triangulation}. This
takes $O(\mOld)$ time and space. Let $\lambda$ be a shortest path that partitions $\POld$ into two
subpolygons, $P_\ell$ and $P_r$, and let $S$ be a set of $n$ sites in
$P_\ell$. Note that we redefine $S$ here to be limited to sites in $P_\ell$, as these are the only relevant sites for this section. Our goal is to construct a data structure that allows us to
find the closest site in $S$ for a query point $q \in P_r$. Agarwal,
Arge, and Staals~\cite{dynamic_geodesic_nn, dynamic_geodesic_nn_arxiv} show that when $\POld$ has
been appropriately stored for two-point shortest path queries, and
$\lambda$ is a single line segment, one can in
$O(n(\log n + \log^2 \mOld))$ time compute an implicit representation of
the Voronoi diagram $\Vor$ restricted to $P_r$. This implicit
representation uses $O(n)$ space and can answer nearest neighbor
queries in $O(\log^2 m)$ time. We will generalize their result to the
case where $\lambda$ is an arbitrary shortest path.

We will argue that: 1) $\Vor$ is a forest that has $O(n)$ degree one
or three vertices and $O(\mOld)$ degree two vertices, and 2) that it
suffices to compute only the degree one and three vertices and the
topology of the forest to answer point location queries. Like Agarwal,
Arge, and Staals~\cite{dynamic_geodesic_nn}, we show that $\Vor$ can
be regarded as an Hamiltonian abstract Voronoi
diagram~\cite{klein1994hamiltonian_vd} of an ordered subset
$S' \subseteq S$ of the sites. Once we establish this fact, we can use
their algorithm more or less as is. The main difference is in the
representation of $\Vor$ for point location queries that we
use. Agarwal, Arge, and Staals~\cite{dynamic_geodesic_nn} use that the
edges of $\Vor$ (that are not stored explicitly) are monotone with
respect to the direction perpendicular to~$\lambda$. However, when
$\lambda$ is not a segment, this is no longer well defined. Hence, we
provide a different approach instead.

\subsection{$\Vor$ as an Hamiltonian abstract Voronoi diagram}
\label{sub:Hamiltonian_avd}

We first establish the following technical lemma.

\begin{lemma}
  \label{lem:one_intersection}
  Let $s,t \in S$ be two sites in $P_\ell$. The bisector $b_{st}$ of
  $s$ and $t$ in $\POld$ intersects $\lambda$ in at most one point
  $w$. Moreover, we can compute this point in $O(\log^2 m)$ time.
\end{lemma}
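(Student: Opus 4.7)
The plan is to split the lemma into uniqueness of the intersection and its computation in $O(\log^2 m)$ time. As an initial reduction I would argue that for any $s \in P_\ell$ and any $w \in \lambda$ the shortest path $\pi_P(s, w)$ can be chosen to lie entirely inside $P_\ell$: any excursion into $P_r$ enters and exits across $\lambda$, and since $\lambda$ is itself a shortest path, the sub-arc of $\lambda$ between the two crossings has length no greater than the excursion and may replace it. Hence $d_P(s, w) = d_{P_\ell}(s, w)$ (and analogously for $t$), reducing the analysis to the bisector inside the simply connected region $P_\ell$.

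For uniqueness I would argue by contradiction. Suppose $b_{st} \cap \lambda$ contains two distinct points $\alpha, \beta$ ordered along $\lambda$, and WLOG the open sub-arc $\lambda(\alpha, \beta)$ lies in $V(t)$. Because each geodesic Voronoi cell in $P_\ell$ is star-shaped with respect to its site, the funnel $F_t$ of all shortest paths from $t$ to $\lambda[\alpha, \beta]$ lies inside $V(t)$: for any $q$ on a shortest path $\pi(t, w)$ with $w \in \lambda[\alpha, \beta]$, the chain $d(s, q) \geq d(s, w) - d(q, w) \geq d(t, w) - d(q, w) = d(t, q)$ (reverse triangle inequality, $w \in V(t)$, and $q$ on the $t$-shortest path) places $q$ in $V(t)$. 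The sub-arc of $b_{st}$ joining $\alpha$ to $\beta$ inside $P$ must then avoid the interior of $F_t$, since $b_{st}$ lies on the common boundary of $V(s)$ and $V(t)$. By a topological analysis of the cyclic order of the four endpoints of $\pi(s, \alpha), \pi(t, \alpha), \pi(s, \beta), \pi(t, \beta)$ around the closed region these paths carve out together with $\lambda[\alpha,\beta]$, I expect to show that either two of the four shortest paths cross (in which case a swap yields a strict shortcut of $\pi(s, \beta)$ or $\pi(t, \alpha)$, contradicting optimality) or the bisector sub-arc cannot reach $\beta$ from $\alpha$ without crossing into $F_t$, a contradiction. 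Formalising this squeezing-and-swap step is the main obstacle I foresee.

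For the computation, given the preprocessed Guibas--Hershberger two-point shortest path structure, each query $d(s, v)$ and $d(t, v)$ for a vertex $v$ of $\lambda$ costs $O(\log m)$. By the uniqueness just established, $\phi(w) := d(s, w) - d(t, w)$ changes sign at most once along $\lambda$, so a binary search over the $O(m)$ vertices of $\lambda$ with $O(\log m)$ probes of $O(\log m)$ each isolates, in $O(\log^2 m)$ total, the unique edge $e$ of $\lambda$ across whose endpoints $\phi$ changes sign. On the straight segment $e$, one more shortest path query at an interior point of $e$ identifies the last vertices $v_s, v_t$ on $\pi(s, \cdot), \pi(t, \cdot)$; in the generic case these remain constant throughout $e$ and the bisector equation $d(s, v_s) + |v_s - w| = d(t, v_t) + |v_t - w|$ reduces to a hyperbola-versus-segment intersection solvable in constant time. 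When the last vertex changes within $e$ at one or more visibility breakpoints, the same uniqueness guarantee confines the crossing to a second nested $O(\log^2 m)$ binary search, keeping the overall cost within the claimed budget.
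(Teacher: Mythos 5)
Your reduction to $P_\ell$ (shortest paths from $s,t\in P_\ell$ to points of $\lambda$ can be rerouted along $\lambda$, so $d_P=d_{P_\ell}$ there) and your observation that the funnel $F_t$ from $t$ to $\lambda[\alpha,\beta]$ lies in the closed cell of $t$ are both correct, and match what the paper implicitly does when it applies Aronov's theorem inside $P_\ell$. But the entire difficulty of the lemma is ruling out \emph{two} crossings, and that is exactly the step you leave open (``formalising this squeezing-and-swap step is the main obstacle''). Worse, the dichotomy you hope for does not hold. First, a crossing $x$ of $\pi(s,\beta)$ and $\pi(t,\alpha)$ does not give a strict shortcut: using $d(s,\alpha)=d(t,\alpha)$ and $d(s,\beta)=d(t,\beta)$, the swap only yields $d(s,\alpha)+d(t,\beta)\le d(s,x)+d(x,\alpha)+d(t,x)+d(x,\beta)=d(s,\beta)+d(t,\alpha)=d(s,\alpha)+d(t,\beta)$, i.e.\ equality is consistent, so no contradiction follows. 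Second, there is no topological obstruction forcing $b_{st}$ into $F_t$: the set $P_\ell\setminus F_t$ is connected and has both $\alpha$ and $\beta$ on its boundary, so a curve joining them while avoiding the funnel exists. Most tellingly, after the reduction your uniqueness argument nowhere uses that $\lambda$ is a \emph{shortest path}; for a general separating chain the statement is false (take a chain with a spike poking into $P_\ell$ and sites on either side of the spike near its tip: the bisector wraps over the tip and ends on the chain twice), so no argument that omits this property can be completed.

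The paper's proof is correspondingly more geometric. After Aronov gives at most two crossings $w_1,w_2$, it distinguishes whether the sub-arc $\lambda'$ of $\lambda$ between them contains a vertex of $P_\ell$. If it does, the path $\pi(s,w_2)$ would have to cross $\lambda\setminus\lambda'$ and could be shortcut along $\lambda$ (here the geodesic property of $\lambda$ enters). If it does not, $\lambda'$ is a convex chain whose interior vertices are vertices of $P_r$; the proof then takes the two tangent chords $\ell_1,\ell_2$ from $t$ to this chain, invokes the segment-chord result of Agarwal, Arge and Staals (a bisector crosses a chord at most once when both sites lie on one side) to force $s$ into a region $R_1$ or $R_2$ pinched between a tangent chord and the cell of $t$, and finally derives a contradiction because $\pi(s,w_2)$ would have to cross the chord $\ell_1$ (itself a shortest path) twice. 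None of this machinery appears in your sketch. Your algorithmic half (binary search on the sign of $d(s,\cdot)-d(t,\cdot)$ along $\lambda$, then a search within the crossing edge) is a reasonable alternative to the paper's approach, which simply cites the corresponding computation of Agarwal, Arge and Staals, but it is contingent on the unproven uniqueness and would also miss a tangential touching that produces no sign change.
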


\begin{proof}
  Aronov~\cite{aronov1989geodesic} already argued that the bisector of
  two sites intersects the boundary of the polygon in two
  points. Applying this result on $P_\ell$ thus gives us that $b_{st}$
  intersects $\lambda$ in at most two points. We now argue that there
  can actually only be one such a point. Assume by contradiction that
  there are two such points $w_1$ and $w_2$, and that the subcurve
  $\lambda' \subseteq \lambda$ from $w_1$ to $w_2$ lies in the Voronoi
  region of $t$. We now consider two cases, depending on whether
  $\lambda'$ contains a vertex $v$ of $P_\ell$.

  \begin{figure}[tb]
    \centering
    \includegraphics{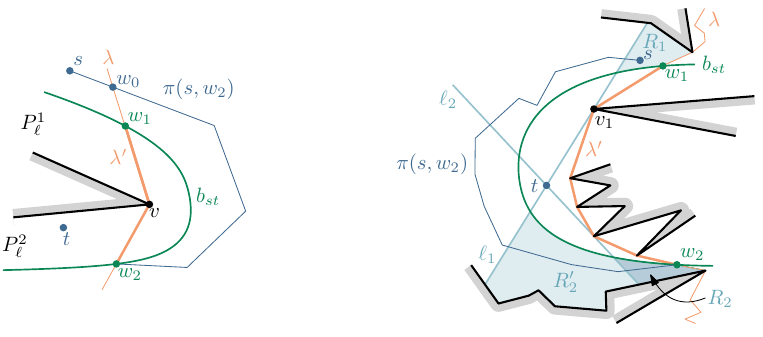}
    \caption{The two cases for the proof that the bisector
      $b_{st}$ can intersect $\lambda$ only once.}
    \label{fig:one_intersection}
  \end{figure}

  In case such a vertex $v$ exists, $v$ actually decomposes $P_\ell$
  into two subpolygons $P_\ell^1$ and $P_\ell^2$ (see
  Figure~\ref{fig:one_intersection}), one of which, say $P_\ell^1$
  contains $s$ and $w_1$, and the other polygon $P_\ell^2$ contains
  $w_2$. The interior of shortest path $\pi(s,w_2)$ is contained in
  the Voronoi region of $s$. However, that means $\pi(s,w_2)$ must
  cross $\lambda$ in $\lambda\setminus \lambda'$, say in a point
  $w_0$. However, as $\lambda$ is a shortest path, that means that we
  can shortcut this path by replacing $\pi(w_0,w_2)$ by a subpath of
  $\lambda$. Contradiction.

  So, we conclude that $\lambda'$ cannot contain any vertices from
  $P_\ell$. Hence $\lambda'$ is a convex chain, whose internal
  vertices (if they exist) are vertices of $P_r$. Consider the two
  tangents of $t$ with this convex chain and clip them so that they
  are chords of $\POld$. In particular, let $\ell_1$ be the chord
  defined by the first point of tangency $v_1$ (considered in the
  order along $\lambda'$) and $\ell_2$ be the second chord. Now
  observe that the part of the bisector $b_{st}$ between $w_1$ and
  $w_2$ must intersect $\ell_1$ twice: once on each side of the point
  of tangency. Moreover, the interval between these points (that thus
  contains $v_1$) must be in the Voronoi region $V_t$ of $t$, and thus
  $\ell_1 \setminus V_t$ consists of two 
  segments $e_2$ and $e_1$ (in that order along $\ell_1$ when it is
  oriented from $t$ to $v_1$).

  This chord $\ell_1$ splits $\POld$ into
  two subpolygons $P'_\ell$ and $P'_r$ (with $P'_\ell$ left of
  $\ell_1$). Since $t \in P'_\ell$, it now follows that $s$ has to
  lie in $P'_r$: if $s \in P'_\ell$, Lemma 29 of Agarwal, Arge, and
  Staals~\cite{dynamic_geodesic_nn_arxiv} tells us that $b_{st}$ can
  intersect $\ell_1$ at most once, thus giving us a contradiction.

  It now follows that $s$ must lie in the region
  $(P_\ell \cap P'_r)\setminus V_t$, which consists of two
  disconnected subregions $R_1$ and $R'_2$. One of which, say $R_1$,
  is incident to a part of $e_1$ and contains $w_1$. The other region
  contains $w_2$. Following exactly the same argument as above for the
  other tangent $\ell_2$ (oriented from its point of tangency towards
  $v$), it follows that $s$ must lie right of this tangent, thereby
  further restricting $R'_2$ into a region $R_2$ incident to $\ell_2$
  and containing $w_2$. See Figure~\ref{fig:one_intersection} for an
  illustration.

  Consider the case $s$ lies in the region $R_1$; the case in which
  $s$ lies in $R_2$ is symmetric. It now follows that the shortest
  path from $s$ to $w_2$ crosses $\ell_1$ twice (as it cannot go
  through the Voronoi region of $t$). Since $\ell_1$ itself is a
  shortest path, we obtain a contradiction.

  We conclude that if $s$ and $t$ lie in $P_\ell$, their bisector
  $b_{st}$ can cross $\lambda$ in only one point $w$. Finally, we
  argue that we can test if $b_{st}$ intersects $\lambda$, and if so,
  find the single intersection point~$w$ in $O(\log^2 m)$ time. This
  actually follows directly from the result of Agarwal, Arge, and
  Staals~\cite{dynamic_geodesic_nn_arxiv}. In particular, from their
  Lemma 40 and the discussion that follows it.
\end{proof}

Fix an endpoint $p$ of $\lambda$, and let $S=\{s_1,..,s_n\}$ be the
sites in $P_\ell$, ordered by increasing distance to $p$. Furthermore,
let $S'=\{t_1,..,t_k\}$ be the subset of sites in $S$ that have a
Voronoi region that intersects $P_r$ (and thus $\lambda$), ordered along $\lambda$ starting with the region containing~$p$. Lemma~\ref{lem:one_intersection} implies that $S'$ is a
subsequence of $S$, and that $\Vor$ is a forest. Moreover, it actually
implies that $\Vor$ can be regarded as an Hamiltonian abstract Voronoi
diagram of the sites in
$S'$~\cite{klein1994hamiltonian_vd}. Therefore, we can use the exactly
the same machinery as in Agarwal, Arge, and
Staals~\cite{dynamic_geodesic_nn_arxiv}, we compute $S'$ in
$O(n(\log n + \log^2 m)$, time and then construct $\Vor$ in additional
$O(n\log^2 m)$ time.

\begin{lemma}
  \label{lem:compute_avd}
  The Voronoi diagram $\Vor $ in $P_r$ is forest. An implicit
  representation \F of this forest can be computed
  in $O(n\log n + n\log^2 m)$ time.
\end{lemma}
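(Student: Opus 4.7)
The plan is to adapt the construction of Agarwal, Arge, and Staals~\cite{dynamic_geodesic_nn_arxiv} for line-segment separators to the case where $\lambda$ is a shortest path, using Lemma~\ref{lem:one_intersection} as the key geometric replacement for their single-crossing property of bisectors with the separator.

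First I would argue that $\Vor$ is a forest. Each geodesic Voronoi cell in $P$ is simply connected~\cite{aronov1989geodesic}, so the restriction of any cell to $P_r$ is connected and enters $P_r$ across $\lambda$. Since every face of $\Vor$ inside $P_r$ is incident to $\lambda$, a cycle in $\Vor$ would by the Jordan curve theorem enclose a face not touching $\lambda$, which is impossible; hence $\Vor$ is a forest whose roots sit on $\lambda$. Lemma~\ref{lem:one_intersection} additionally guarantees that the traces of two distinct cells along $\lambda$ cannot interleave, so the combinatorial structure is well behaved.

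Next I would set up the Hamiltonian abstract Voronoi structure. Fix the endpoint $p$ of $\lambda$ and sort the sites by geodesic distance to $p$, giving $s_1,\dots,s_n$; reading off the cells that actually touch $\lambda$, in the order along $\lambda$ from $p$, yields a subsequence $S' = \{t_1,\dots,t_k\}$. Lemma~\ref{lem:one_intersection} forces this traversal order to agree with the distance order (otherwise a bisector would cross $\lambda$ twice), so the curve $\lambda$ visits the cells of $S'$ exactly in $S'$-order, which is precisely the Hamiltonian condition of Klein and Lingas~\cite{klein1994hamiltonian_vd}. The remaining abstract Voronoi axioms (admissibility of the bisector system and pairwise/triple intersection behaviour) follow from Aronov's analysis of geodesic Voronoi diagrams in simple polygons~\cite{aronov1989geodesic} together with the general position assumption.

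Finally I would run the algorithm of~\cite{dynamic_geodesic_nn_arxiv}. After the $O(m)$ preprocessing for two-point shortest paths, sorting by distance to $p$ costs $O(n\log n)$. Identifying $S'$ is done by a divide-and-conquer pruning that performs $O(n)$ bisector-$\lambda$ intersection tests, each costing $O(\log^2 m)$ by Lemma~\ref{lem:one_intersection}. Once $S'$ and its Hamiltonian order are known, the randomized construction of a Hamiltonian abstract Voronoi diagram produces the combinatorial topology of $\F$ using $O(|S'|)$ expected elementary bisector operations, each again $O(\log^2 m)$, for an additional $O(n\log^2 m)$ expected time; summing gives the claimed $O(n\log n + n\log^2 m)$. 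The main obstacle I anticipate is verifying that every geometric argument from~\cite{dynamic_geodesic_nn_arxiv} that specifically exploited $\lambda$ being a straight segment, in particular any monotonicity statement about Voronoi edges with respect to the direction perpendicular to $\lambda$, can either be replaced by an argument using only the one-crossing property, or else be sidestepped by working purely combinatorially; the paper already foreshadows this by promising a different representation of $\Vor$ for point location, so the Hamiltonian extraction argument here must not depend on any such monotonicity.
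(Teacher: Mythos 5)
Your proposal is correct and follows essentially the same route as the paper: use Lemma~\ref{lem:one_intersection} to conclude that the cells meeting $\lambda$ appear in an order consistent with the distance order to $p$, so that $\Vor$ is a forest and a Hamiltonian abstract Voronoi diagram of the subsequence $S'$, and then invoke the machinery of Agarwal, Arge, and Staals~\cite{dynamic_geodesic_nn_arxiv} with each geometric primitive costing $O(\log^2 m)$, giving $O(n\log n + n\log^2 m)$ total. Your extra details (the Jordan-curve argument for forestness and the breakdown of the $S'$-computation and construction phases) are consistent with, and merely expand on, what the paper states; only your hedging that the final bound is ``expected'' is slightly more cautious than the paper's deterministic claim, which inherits whatever guarantee the cited machinery provides.
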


\subsection{Point location in $\Vor$}

Let $\Vor$ denote the Voronoi diagram of the sites in $S$ in $P_r$ and
$\F$ the corresponding representation of the forest. To find the nearest neighbor of a point
$q \in P_r$ we must be able to perform point location on the forest
$\F$. When the separator $\lambda$ is a (vertical) line segment,
Agarwal, Arge, and Staals observe that the bisectors pieces in $\F$
are $x$-monotone~\cite{dynamic_geodesic_nn}. This allows them to use
the point location data structure of Edelsbrunner, Guibas, and
Stolfi~\cite{point_location_monotone_subdivision} for monotone
subdivision to answer point location queries in $O(\log n \log m)$
time. Sadly, our bisector pieces are no longer monotone. One approach
would be to partition $P_r$ into regions such that the bisector pieces
are monotone within a region. However, this would incur an additional
$O(\log m)$ factor in the query time. We therefore present a different
point location method here.

\subparagraph{When $\F$ is a tree.} The \emph{centroid} $c$ of a tree
$\Tr$ of size $n$ is a node whose removal decomposes the tree into
subtrees $\Tr_1,..,\Tr_k$ of size at most $\lfloor n/2 \rfloor$
each. The \emph{centroid decomposition} $\Tr_C$ of \Tr is a tree with
a root node representing the centroid $c$ and a child node --also a
centroid decomposition-- for each subtree $\Tr_i$. Such a centroid
decomposition has height $O(\log n)$, linear size, and can be computed
in linear time~\cite{giustina19new_linear_time_algor_centr_decom}.
When \F is a tree, our data structure just stores the centroid
decomposition of $\F$.

To answer a query $q$ we do the following. Observe that the centroid
$c \in \F$ corresponds to a vertex of the Voronoi diagram. Let
$s_1,s_2,s_3$ be the sites that define $c$ in counter clockwise order
around $c$, and let $\Tr_1,\Tr_2,\Tr_3$ be the three maximal subtrees
corresponding to the children of $c$ in the centroid
decomposition. More specifically, let $\Tr_1$ be the maximal subtree
connected to $c$ by the edge that is part of the bisector of the
``other'' two sites, i.e. $s_2$ and $s_3$ in this case. The trees
$\Tr_2$, and $\Tr_3$ are defined analogously. The shortest paths from
$c$ to $s_1$, $s_2$, and $s_3$, decompose $P_r$ into three sub-regions
$P_1,P_2,P_3$, where $P_i$ is the subpolygon opposite to
$\pi(c,s_i)$. See Figure~\ref{fig:tree_point_location}.

\begin{figure}
    \centering
    \includegraphics{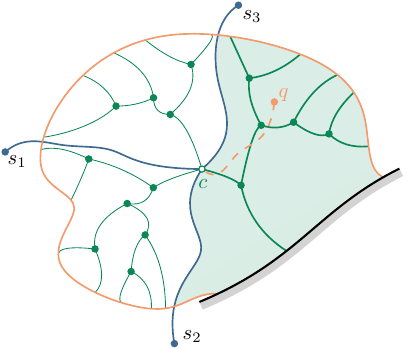}
    \caption{$P_1$ is the green subpolygon.}
    \label{fig:tree_point_location}
\end{figure}

\begin{lemma}
  \label{lem:single_subtree}
  The subtree $\Tr_i$ represents $\Vor \cap P_i$.
\end{lemma}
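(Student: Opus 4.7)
The plan is to show that removing $c$ from $\F$ disconnects it into three subforests, one inside each region $P_i$, and that the subforest inside $P_i$ is exactly $\Tr_i$. First, I would verify that the three shortest-path stubs $\pi(c,s_1)\cap P_r$, $\pi(c,s_2)\cap P_r$, $\pi(c,s_3)\cap P_r$, together with $\partial P_r$, indeed partition $P_r$ into $P_1,P_2,P_3$. Each stub is a simple curve from $c$ to a point on $\lambda\subseteq\partial P_r$ because every $s_i$ lies in $P_\ell$, so the shortest path from $c$ to $s_i$ must leave $P_r$ through $\lambda$. Distinct stubs meet only at $c$: the three Voronoi wedges at $c$ are non-degenerate, so the initial directions of $\pi(c,s_1),\pi(c,s_2),\pi(c,s_3)$ lie in distinct wedges, and by uniqueness of shortest paths in a simple polygon, two shortest paths leaving a common point in different directions cannot meet again.

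The main technical step is to show that no bisector arc of $\F$ crosses any $\pi(c,s_m)\cap P_r$ except at $c$ itself. I would establish the pointwise claim that every $p\in(\pi(c,s_m)\cap P_r)\setminus\{c\}$ lies strictly inside the Voronoi region of $s_m$. Fix such a $p$ and any $u\in S$; the geodesic triangle inequality gives $d(p,u)\ge d(c,u)-d(c,p)$, while $d(p,s_m)=d(c,s_m)-d(c,p)$ because $p\in\pi(c,s_m)$. For $u\notin\{s_1,s_2,s_3\}$, the Voronoi-vertex property of $c$ gives $d(c,u)>d(c,s_m)$, so $d(p,u)>d(p,s_m)$. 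For $u=s_j$ with $j\in\{1,2,3\}\setminus\{m\}$, equality in the triangle inequality would force $p$ to lie on some shortest path from $c$ to $s_j$; but $\pi(c,s_j)$ leaves $c$ in a different wedge than $\pi(c,s_m)$, so by uniqueness these two shortest paths share only $c$, giving $d(p,s_j)>d(p,s_m)$.

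The pointwise claim immediately implies that no Voronoi edge of $\F$ crosses any stub $\pi(c,s_m)\cap P_r$ off $c$, since such a crossing would make the crossing point equidistant to two sites, contradicting strict membership in the region of $s_m$. Therefore $\F\setminus\{c\}$ decomposes into three subforests $\F_1,\F_2,\F_3$ with $\F_i\subseteq P_i$. To match $\F_i$ with $\Tr_i$, I would examine the infinitesimal picture at $c$: the wedge of $s_i$ contains the initial direction of $\pi(c,s_i)$, so the bisector arc of $b_{s_js_k}$, which separates the wedges of $s_j$ and $s_k$, leaves $c$ in the angular sector opposite to $\pi(c,s_i)$ and therefore into $P_i$. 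Since $\Tr_i$ is by definition the centroid-decomposition subtree connected to $c$ via this very edge, we conclude $\F_i=\Tr_i$, and hence $\Tr_i$ represents $\Vor\cap P_i$.

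The hard part will be the pointwise claim, and in particular justifying the non-degeneracy of the three Voronoi wedges at $c$ so that the three shortest paths $\pi(c,s_i)$ diverge at $c$ and remain separated; this should follow from the general position assumptions stated in Section~\ref{sec:preliminaries}, which forbid a point equidistant from four sites and guarantee uniqueness of shortest paths.
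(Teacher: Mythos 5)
Your proof is correct and follows essentially the same route as the paper: the key fact that no edge of $\Vor$ crosses $\pi(c,s_j)$ because that path stays in the Voronoi cell of $s_j$, followed by identifying the bisector edge $b_{s_js_k}$ incident to $c$ as the one entering $P_i$, which pins $\Tr_i$ to $P_i$. Your version merely spells out more detail than the paper does (the strict pointwise containment via the triangle inequality and uniqueness of shortest paths, and the non-degeneracy of the wedges at $c$, which the paper leaves implicit under its general position assumptions).
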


\begin{proof}
  No edges of $\Vor$ can intersect a shortest path from $c$ to $s_j$,
  because this shortest path is contained within the Voronoi cell of
  $s_j$. So, any subtree of \Vor that that intersects $P_i$ is also
  contained within $P_i$. All that remains to argue is that $P_i$
  intersects (and thus contains) $\Tr_i$. Observe that $P_i$ is
  bounded by edges of the shortest paths from $c$ to the other two
  sites $s_j$ and $s_k$ (with $j \neq k \neq i$). As the Voronoi cells
  of $s_j$ and $s_k$ again contain these paths, it implies their
  bisector must define an edge of \Vor in $P_i$. In particular, this
  is one of the edges incident to $c$. By definition of $\Tr_i$ this
  edge appears in $\Tr_i$, and thus $\Tr_i$ is contained in $P_i$.
\end{proof}

Given a query point $q \in P_r$, our query strategy is now to compute
which subpolygon~$P_i$ contains~$q$. Lemma~\ref{lem:single_subtree}
then guarantees that $q$ lies in the Voronoi region incident to an
edge of $\Tr_i$. We then recursively search in the subtree
$\Tr_i$ until we have a subtree $\Tr'$ of constant size. Each edge of
$\Tr'$ is a piece of a bisector of two sites, and thus gives us at
most two candidate closest sites. For each of the $O(1)$ candidates,
we explicitly query the length of the shortest path in $O(\log m)$
time, and report the closest site.

What remains to describe is how we compute which subpolygon $P_i$
contains $q$. We show that we can do this in $O(\log m)$ time, and
thus this yields an $O(\log n\log m)$ time query algorithm. We first
prove the following helper lemma.

\begin{lemma}
  \label{lem:query_in_subpolygon}
  Given a query point $q \in P_r$, we can test if $O(\log m)$ time if
  $q \in P_i$.
\end{lemma}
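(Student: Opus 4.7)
The plan is to reduce this point-location question to a local angular comparison at $c$. At construction time, for each centroid $c$ with defining sites $s_1, s_2, s_3$ I precompute and store the three first-edge directions of $\pi(c, s_1), \pi(c, s_2), \pi(c, s_3)$ at $c$; under the general-position assumption these directions are distinct and carve a neighbourhood of $c$ into three angular sectors, one bounded by each pair. At query time I extract the first edge of $\pi(c, q)$ from the Guibas--Hershberger two-point shortest path data structure~\cite{2PSP_simple_polygon} in $O(\log m)$ time and perform a constant-time angular comparison against the three stored directions; I answer ``yes'' iff the resulting sector is the one bounded by the first edges of $\pi(c, s_j)$ and $\pi(c, s_k)$ that does not contain the first edge of $\pi(c, s_i)$.

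The correctness reduces to showing that $q \in P_i$ iff the first edge of $\pi(c, q)$ lies in that sector. The key geometric ingredient is that two shortest paths sharing an endpoint intersect in a single connected subpath containing that endpoint, so $\pi(c, q)$ cannot cross $\pi(c, s_j)$ or $\pi(c, s_k)$ transversally away from $c$. Hence $\pi(c, q)$ is entirely confined to whichever of the three sectors its first edge enters, and both directions of the biconditional follow from Lemma~\ref{lem:single_subtree}-style reasoning. I will also record the auxiliary observation that $\pi(c, q) \subseteq P_r$ whenever $c, q \in P_r$, which follows from $\lambda$ itself being a shortest path via the standard shortcut-along-$\lambda$ argument combined with general position. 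Overall the query takes $O(\log m)$ time, as required.

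The main obstacle I anticipate is the degenerate case where the first edge of $\pi(c, q)$ is collinear with the first edge of one of the bounding paths $\pi(c, s_j)$, i.e.\ $q$ lies on the ray extending that initial edge. Under general position on the inputs this does not arise for generic queries and can be dispatched by symbolic perturbation; if needed, a one-shot $O(\log^2 m)$ secondary test derived from Lemma~\ref{lem:intersection_SP} locates the first divergence point of $\pi(c, q)$ from $\pi(c, s_j)$ and settles the sector from there, which would only affect the overall query time by a logarithmic factor at the centroids where the degeneracy occurs.
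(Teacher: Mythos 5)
Your sector test at $c$ is sound \emph{only} in the non-degenerate situation where the first edge of $\pi(c,q)$ differs from the first edges of the bounding paths: then the two paths meet only in $c$, $\pi(c,q)$ stays in $P_r$ by geodesic convexity, and it cannot cross $\pi(c,s_j)$ or $\pi(c,s_k)$, so confinement to one region follows. The genuine gap is your treatment of the ``degenerate'' case. It is not a measure-zero event and cannot be dispatched by symbolic perturbation: whenever the first edge of $\pi(c,s_j)$ ends at a reflex vertex $v$ of the polygon (the typical case), \emph{every} query point $q$ in the positive-area pocket of $P_r$ not visible from $c$ whose geodesic from $c$ bends around $v$ has $\overline{cv}$ as the first edge of $\pi(c,q)$. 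For all such $q$ the initial direction carries no information about which side of $\pi(c,s_j)$ the point lies on (the decision is made at the divergence vertex, further along the common prefix), so the angular comparison is structurally insufficient, not merely non-generic. Your fallback does not rescue the statement as claimed: it costs $O(\log^2 m)$, whereas the lemma asserts $O(\log m)$, and since the degeneracy can occur at every centroid on the descent this would degrade the query time of Theorem~\ref{thm:forest_by_sp} from $O(\log n\log m)$ to $O(\log n\log^2 m)$; moreover Lemma~\ref{lem:intersection_SP} as stated requires one of the two paths to have its endpoints on $\partial P$, which $\pi(c,s_j)$ does not, so you would first have to extend it to the boundary anyway.

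The repair that keeps $O(\log m)$ is essentially the paper's proof, which avoids looking at $\pi(c,q)$ altogether: extend $\pi(s_j,c)$ and $\pi(c,s_k)$ by ray shooting into boundary-to-boundary shortest paths $\pi(s_j',c_j')$ and $\pi(c_k',s_k')$, decide in $O(\log m)$ whether $c$ is a convex or reflex corner of $P_i$, and then characterize $q\in P_i$ as lying right of \emph{both} extended paths (convex case) or right of \emph{at least one} (reflex case), each test done in $O(\log m)$ time via Lemma~\ref{lem:side_test_queries}. Testing sidedness against the whole bounding path rather than its first edge is exactly what makes the pocket case unproblematic; if you want to keep your local-sector idea, you must at least fall back to such a global side test (not to a divergence-point search) in the coincident-first-edge case to stay within $O(\log m)$.
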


\begin{proof}
  Assume without loss of generality that $P_i=P_1$. Consider the
  shortest path $\pi(s_2,c)$, and extend its first and last edge until
  we hit the boundary of $\POld$. Let $\pi(s'_2,c'_2)$ be the
  resulting path. Similarly, extend $\pi(c,s_3)$ into
  $\pi(c'_3,s'_3)$.

  We now observe that if $c$ is a convex vertex of $P_1$ then we have
  that $q \in P_1$ if and only if $q$ lies right of both
  $\pi(s'_2,c'_2)$ and $\pi(c'_3,s'_3)$. If $c$ is a reflex vertex of
  $P_1$ then $q \in P_1$ if and only if $q$ lies right of
  $\pi(s'_2,c'_2)$ or of $\pi(c'_3,s'_3)$. See
  Figure~\ref{fig:def_subpolygon_p1}.

  \begin{figure}[tb]
    \centering
    \includegraphics{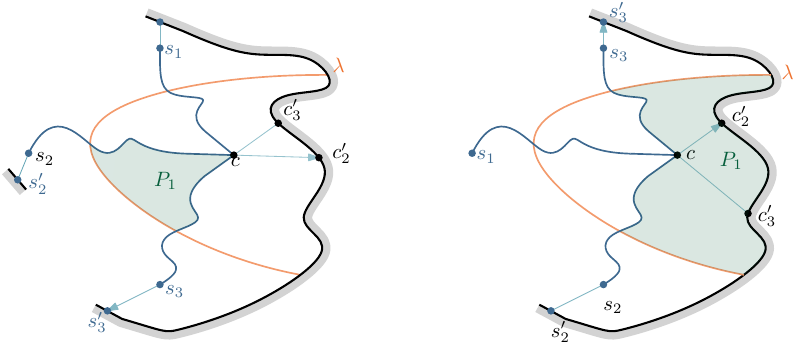}
    \caption{If $c$ is a convex corner of $P_1$ (left) then $q \in
      P_1$ has to lie right of both $\pi(s'_2,c'_2)$ and
      $\pi(c'_3,s'_3)$. If $c$ is a reflex corner of $P_1$ $q \in P_1$
      has to lie right of $\pi(s'_2,c'_2)$ or $\pi(c'_3,s'_3)$.
    }
    \label{fig:def_subpolygon_p1}
  \end{figure}

  We can argue this as follows: observe that $\pi(s'_2,c'_2)$ still
  intersects $\pi(c'_3,s'_3)$ in exactly one point, namely
  $c$. Similarly, $\pi(s'_2,c'_2)$ can intersect $\lambda$ only once;
  in the same points as $\pi(s_2,c_2)$. Therefore, the segment
  $\overline{cc'_2}$ does \emph{not} intersect $\lambda$ or
  $\pi(c'_3,s'_3)$.

  Let $W_{23}$ be the subpolygon enclosed by $\pi(c,s'_2)$, the
  boundary of $P$ in counter clockwise order, and $\pi(s'_3,c)$. We
  thus have $P_1 = W_{23} \cap P_r$. Now if $c$ is convex, it follows
  that the segment $\overline{cc'_2}$ must lie outside of $P_1$. Since
  it does not intersect $\pi(c'_3,s'_3)$ (and analogously
  $\overline{c'_3c}$ does not intersect $\pi(c,s'_2)$), it follows we
  can write $W_{23}$ as the intersection of the two ``right''
  subpolygons defined by $\pi(s'_2,c'_2)$ and $\pi(c'_3,s'_3)$. If $c$
  is reflex, these segments $\overline{cc'_2}$ and $\overline{c'_3c}$
  lie inside $P_1$, and thus $W_{23}$ is the union of these two
  subpolygons.

  Given point $c$ and the sites $s_1,s_2,s_3$, we can easily test in
  $O(\log m)$ time whether $c$ is a convex or a reflex vertex by
  computing the first vertices of $\pi(c,s_2)$ and
  $\pi(c,s_2)$. Similarly, we can compute the endpoints
  $s'_2,c'_2,c'_3,c'_3$ in $O(\log m)$ time by a constant number of ray
  shooting queries. Finally, we then perform the two side test queries
  using Lemma~\ref{lem:side_test_queries}. Since these also take
  $O(\log m)$ time the lemma follows.
\end{proof}

\subparagraph{When \F is a forest.} When \F is a forest, we will add
additional edges between the trees in \F such that we obtain a
single tree $\F^*$ that is still suitable for our point location
queries.

When \F is a forest, each tree in $\F$ has a unique edge that ends
on $\partial P_r$. This means there is a well-defined ordering of the
trees along $\partial P_r$. We obtain the tree $\F^*$ by adding an
edge between between the vertices on $\partial P_r$ of each pair of
neighboring trees. See Figure~\ref{fig:forest_point_location} for an
illustration.

We then proceed as described before in case \Vor is a tree. When
answering a query, it can be that the centroid $c$ is defined by an
endpoint of one of these new edges, i.e. a vertex on $\partial P$. The
shortest paths from $c$ to $s_1$ and $s_2$ again decompose $P_r$ into
three sub-regions: above $\pi(c,s_1)$, between $\pi(c,s_1)$ and
$\pi(c,s_2)$, and below $\pi(c,s_2)$. As before, these correspond
exactly to subtrees of $c$. We can thus determine which subtree to
recurse on by checking whether $q$ lies above or below of both
shortest paths.

\begin{figure}
    \centering
    \includegraphics{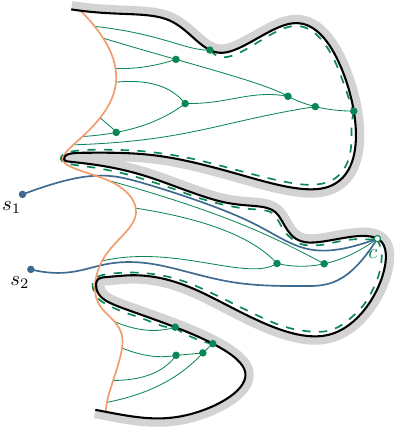}
    \caption{We can extend \F into a tree by connecting the individual
      trees along $\partial P_r$ using the dashed edges.
    }
    \label{fig:forest_point_location}
\end{figure}

\forestBySP*

\section{Concluding remarks}\label{sec:concluding_remarks}
We believe it should be possible to extend our data structure to
support deleting sites as well. In particular, by further
incorporating the results of Agarwal, Arge, and
Staals~\cite{dynamic_geodesic_nn}. However, for ease of exposition we
deferred site deletions for now. Supporting deleting vertices of the
polygon seems more interesting, however also much more
complicated. Deleting a segment might make part of the polygon that was obstructed by this segment before now directly reachable from the query point. It is unclear if there exists a limited number of bounded nearest neighbor queries that can together answer a nearest neighbor query for this area of the polygon. Further ideas are likely required to support this type of deletions as well.

\bibliography{./bibliography.bib}

\end{document}